

\documentclass[preprint,12pt]{elsarticle}




\usepackage{amssymb}
\usepackage{extarrows}


\newtheorem{thm}{Theorem}
\newtheorem{lem}[thm]{Lemma}
\newtheorem{prop}[thm]{Proposition}
\newtheorem{exam}[thm]{Example}
\newdefinition{rmk}{Remark}
\newproof{proof}{Proof}
\newproof{pot}{Proof of Theorem \ref{thm2}}

\def\mP{{\mathbb P}}

\journal{Elsevier}

\begin{document}

\begin{frontmatter}



\title{A new conception for computing gr\"{o}bner basis and its applications}


\author{Lei Huang}

\address{Key Laboratory of Mathematics Mechanization\\
 Institute of Systems Science,   AMSS\\
 Beijing 100190,  China\\
 lhuang@mmrc.iss.ac.cn}

\begin{abstract}
This paper presents a conception for computing gr\"{o}bner basis. We convert some of gr\"{o}bner-computing algorithms, e.g.,  F5, extended F5 and GWV algorithms into a special type of algorithm. The new algorithm's finite termination problem can be described by equivalent conditions, so all the above algorithms can be determined when they terminate finitely. At last, a new criterion is presented. It is an improvement for the Rewritten and Signature Criterion.
\end{abstract}

\begin{keyword}
 \sep Gr\"obner basis \sep F5 \sep GVW \sep TRB \sep Mpair



\end{keyword}

\end{frontmatter}


\section{Introduction}
Since the Gr\"{o}bner basis was proposed from 1965 (Buchberger, \cite{buchberger1}), it has been implemented in most computer algebra systems (e.g., Maple,
Mathematica, Magma, Sage, Singular, Macaulay 2, CoCoA, etc).

There has been extensive effort in finding more efficient algorithms for computing
Gr\"{o}bner bases. e.g., Buchberger \cite{buchberger2, buchberger3}, Lazard (1983, \cite{lazard}), Moller, Mora and Traverso (1992,
\cite{moller}), Faug\`{e}re (1999, \cite{faugere1}). In 2002, Faug\`{e}re presented the F5 algorithm to detect useless S-polynomials by the Syzygy and Rewritten criterions \cite{faugere2}. This algorithm had the fastest speed for a long time. It was also discussed and improved by many papers; see Eder and Perry (2009,
\cite{eder}), Sun and Wang (2009, \cite{wang1}), Hashemi and Ars (2010, \cite{hashemi}). Hashemi and Ars extended the F5 algorithm by modifying the signature order. This modification can bring more efficiency to the F5 algorithm. Recently, Gao, Volny IV and Wang (2010, \cite{gao1, gao2}) proposed new conceptions and techniques to compute Gr\"{o}bner basis. e.g., they proposed the conception of pairs; generalized the signature order to be an arbitrary one; used arbitrary top reductions to instead F5 reductions; etc.

For the greater efficiency, new techniques were described more and more complicate than before. It constitutes obstacles for people to understand all points of algorithms, to make comparisons between different algorithms, and to search for new algorithms. e.g., the finite termination problem. This problem can be easily determined for simple algorithms. But for recently proposed algorithms, it becomes not easy. Faug\`{e}re, Hashemi and Ars tried to prove the F5's finite termination problem in their paper \cite{faugere2, hashemi} with a few lines, but few people could understand their proofs clearly. In September 2010, Gao, Volny IV and Wang announced at their paper (see \cite{gao2}) that the termination of GVW algorithm is a open problem; They also believed that the same problem of F5 has not be solved yet.

The author studied the termination problems of these algorithms. Some results (F5's and GVW's) were discovered in different ways, and described by different languages. To prove them together, we need to summarize their common points to build a general algorithm. By absorbed ideas from the GVW and F5B (see Sun and Wang \cite{wang2}) algorithms, we built the general TRB algorithm, where 'TRB' comes from the fact: all these algorithms have a common purpose of generating TRB pairs. In particular, the TRB algorithm has the following features:
\begin{itemize}
\item
Some efficient algorithms can be converted into regular TRB algorithms.
\item
Problems can be discussed together with the TRB algorithm, e.g., the correctness and termination problems.
\item
It provides a platform for generating new algorithms.
\end{itemize}

In this paper, we proved all the F5, extended F5, and GVW algorithms are regular TRB algorithms. With a general discussion, their terminations were all described. The conclusion is: F5 and extended F5 algorithms always terminate finitely. The GVW algorithm has finite termination if the monomial order and signature order are almost compatible.

The last topic is a new criterion (Mpair Criterion) for detecting useless S-polynomials. The new criterion can block more unnecessary pairs than before.  We proved that the Rewritten and Signature Criterions can hardly block more pairs than Mpair Criterion. And sometimes unnecessary pairs meet only the new proposed criterion.

This paper is organized as follows: In Section 2, we introduce some basic conceptions. The definition and a correctness proof of TRB algorithm are proposed. In Section 3, 4 and 5, we convert the F5, Extended F5 and GVW algorithms into regular TRB algorithms respectively. Section 6 provides some equivalent conditions for the termination problem of TRB algorithm. In Section 7, we propose the Mpair Criterion.

\section{Comments and Definitions}
Let $\mathbb K$ be a field, $\mP={\mathbb K}[x_1, \cdots, x_n]$ the polynomial ring, $f=(f_1, \cdots, f_d)\in \mP^d$ the initial polynomials list.

$(u,v)\in \mP^d\times \mP$ is a \textbf{pair} if $u \cdot f=v$ holds. Denote $PAIR \subset\mP^d\times \mP$ the set of all pairs.

The monomials' set and terms' set of $\mathbb P$ are denoted by $M$ and $T$ respectively. $s\in\mP^d$ is called a monomial (term) if $s=mE_i$, where $m$ is a monomial (term) of $\mP$, $E_i$ is the $i$-th canonical unit vector, $i=1,\cdots,d$. The set of all monomials (terms) in $\mP^d$ is denoted by $ME$ ($TE$).

There are three main orders to be used in this paper. The \textbf{monomial order} $\prec_m$, \textbf{signature order} $\prec_s$, \textbf{pair order} $\prec_p$ are defined over $\mP$, $\mP^d$ and $PAIR$ respectively. $\prec_m$ and $\prec_s$ are both admissible orders.  $(u_1,v_1)\prec_p(u_2,v_2)\Leftrightarrow lm(v_1)lm(u_2)\prec_s lm(v_2)lm(u_1)$.

Let $p=(u,v)\in PAIR$. Orders $\prec_m$, $\prec_s$ and $\prec_p$ are applied on $v$, $u$ and $p$ respectively.
The \textbf{leading monomial} (\textbf{leading term}) of pair $p$ is defined same to the  leading monomial (term) of $v$. i.e., $lm(p)=lm(v)$, $lt(p)=lt(v)$.
Define the \textbf{signature} of $p$ as the leading monomial of $u$. $sig(p)=lm(u)$.

\vspace{.3cm}
We call two pairs \textbf{equivalent}, $p_1\equiv p_2$, if $sig(p_1)=sig(p_2)$ and $lm(p_1)=lm(p_2)$. Call them \textbf{similar}, $p_1\sim p_2$, if $lm(p_1) sig(p_2) = lm(p_2) sig(p_1)$.

A pair $p$ is called \textbf{syzygy} if $lm(p)=0$. In the paper, we also call signature $s$ syzygy, if there has a syzygy pair $p$ satisfied $sig(p)=s$.

Say pair $p_1$ is \textbf{top reducible} by pair $p_2$, if both of them are non-syzygy, $p_1\prec_p p_2$ and $lm(p_2)|lm(p_1)$. The corresponding top reduction is to replace pair $p_1$ by $\displaystyle p_1-\frac{lt(p_1)}{lt(p_2)}p_2$.

If non-syzygy pair $p$ can not be top reduced by any pair, call $p$ a \textbf{top reductional prime} pair, simply by \textbf{TRP} pair. All the TRP pairs form the set $TRP$. Put similar-to-$p$ TRP pairs all together to form a set, call it \textbf{TRP similar set} of $p$. We call pair $p'$ is a \textbf{top reductional basis} (\textbf{TRB}) pair if it is a TRP pair, and $sig(p')$ can not be proper divided by any signatures in the same TRP similar set.

A \textbf{multiplied pair} $[m,p]$ is an element in $M\times PAIR$. Its value equals $mp$. $sig([m,p])=sig(mp)$. Call $p_1$ is top reducible by multiplied pair $[m,p]$, if $p_1$ is top reducible by $p$, and $lm(p_1)=lm(mp)$. Given two non-similar and non-syzygy pairs $p_1$ and $p_2$, their \textbf{joint multiplied pair} (\textbf{J-pair}) is defined as $[m,p]$, where $p$ equals either $p_1$ or $p_2$ respectively, when $p_1\prec_p p_2$ or $\succ_p p_2$. $m=\displaystyle \frac{lcm(lm(p_1),lm(p_2))}{lm(p)}$.

\vspace{.3cm}
A property will be usually used in the left of paper. Proved it below before the start of discussions.
\begin{prop}\label{trppro}
For every non-syzygy signature $s$, there will have at least one TRP pair $p$ signed $s$ ($sig(p)=s$). All the pair $p'$ signed $s$ will satisfy $p\preceq_p p'$. $p\sim p'$ if and only if $p'\in TRP$.
\end{prop}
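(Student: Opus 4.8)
The plan is to prove the three assertions in order: existence of a TRP pair with signature $s$; its $\preceq_p$-minimality among all pairs with signature $s$; and the $\sim$-characterization of the TRP pairs with signature $s$. For existence, I would take any pair $p_0$ with $sig(p_0)=s$ and iterate top reduction, applying an arbitrary top reduction as long as the current pair is not TRP. Two facts drive this. First, a top reduction of $p_1$ by $p_2$ preserves the signature: since $p_1\prec_p p_2$ and $lm(p_2)\mid lm(p_1)$, cancelling the monomial $lm(p_2)$ (admissibility of $\prec_s$) gives $\frac{lm(p_1)}{lm(p_2)}sig(p_2)\prec_s sig(p_1)$, so the term subtracted from the $u$-component has leading monomial strictly below $sig(p_1)=lm(u_1)$, leaving $lm(u_1)$ unchanged; meanwhile the leading terms of the $v$-components cancel, so $lm(v)$ strictly decreases in $\prec_m$. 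Since $\prec_m$ is a well-order the iteration halts, and since it keeps the signature equal to $s$, which is non-syzygy, no intermediate pair is a syzygy; the final pair is therefore a TRP pair of signature $s$.

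The heart of the argument is the sub-claim: if $p$ is TRP with $sig(p)=s$, then $lm(p)$ is the $\prec_m$-least element of $\{lm(p'):sig(p')=s\}$. Suppose some pair $p'=(u',v')$ with $sig(p')=s$ had $lm(p')\prec_m lm(p)$. After rescaling $p'$ by a constant so that the leading terms of the $u$-components cancel, set $q=p-p'$. Its $u$-component is nonzero --- otherwise $p$ would be a scalar multiple of $p'$ and $lm(p)=lm(p')$ --- so $sig(q)\prec_s s$; and since $lm(v)\succ_m lm(v')$, the $v$-component of $q$ has leading monomial $lm(p)\ne 0$, so $q$ is non-syzygy with $lm(q)=lm(p)$. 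Then $lm(q)\mid lm(p)$ and $lm(p)\,sig(q)\prec_s lm(p)\,s=lm(q)\,sig(p)$, that is $p\prec_p q$, so $p$ is top reducible by $q$, contradicting that $p$ is TRP. In particular any two TRP pairs of signature $s$ share the same leading monomial, hence are $\sim$ (indeed $\equiv$).

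The remaining assertions follow. For any pair $p'$ of signature $s$ (necessarily non-syzygy, since $s$ is), the sub-claim gives $lm(p)\preceq_m lm(p')$; multiplying by the scalar part of $s$ and using that admissible signature orders are compatible with $\prec_m$ on the monomials $tE_i$ (equivalently $t_1\prec_m t_2\Leftrightarrow t_1E_i\prec_s t_2E_i$) yields $lm(p)\,s\preceq_s lm(p')\,s$, that is $p\preceq_p p'$. For the last claim: if $p'\in TRP$ with $sig(p')=s$, then $lm(p')=lm(p)$ by the sub-claim, whence $lm(p')sig(p)=lm(p)sig(p')$ and $p\sim p'$. Conversely, if $p\sim p'$ with $sig(p')=s$, then $lm(p')\,s=lm(p)\,s$ forces $lm(p')=lm(p)$; were $p'$ top reducible by a non-syzygy $p_2$ with $lm(p_2)\mid lm(p')$ and $p'\prec_p p_2$, then since $p'$ and $p$ have the same leading monomial and signature, the same $p_2$ would top reduce $p$, contradicting $p\in TRP$; hence $p'\in TRP$.

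The step I expect to be the real obstacle is the sub-claim: one must think to form the auxiliary pair $q=p-p'$ and then verify it is a genuine top reducer of $p$ --- in particular that it is non-syzygy, and that $sig(q)\prec_s s$ indeed produces $p\prec_p q$. The only other delicate point is the passage from $\prec_m$-minimality of the leading monomial to $\prec_p$-minimality of the pair, which relies on $\prec_s$ being compatible with $\prec_m$ along each coordinate $E_i$; everything else is routine bookkeeping with the relations $\equiv$, $\sim$, $\prec_p$.
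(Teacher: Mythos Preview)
Your outline matches the paper's: existence by iterated top reduction (using that $lm$ strictly drops in $\prec_m$, hence the process terminates), and the decisive step via the difference $q=lc(u')\,p-lc(u)\,p'$, which top-reduces whichever of $p,p'$ carries the $\prec_m$-larger leading monomial. That cleanly gives existence and the equivalence $p\sim p'\Leftrightarrow p'\in TRP$, and it is exactly the paper's step~3.

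The genuine gap is your passage from $\prec_m$-minimality of $lm(p)$ to $p\preceq_p p'$. You invoke ``that admissible signature orders are compatible with $\prec_m$ on the monomials $tE_i$,'' i.e.\ $t_1\prec_m t_2\Leftrightarrow t_1E_i\prec_s t_2E_i$, but the paper does \emph{not} assume this: $\prec_s$ is an arbitrary admissible order on $\mathbb{P}^d$, and its restriction $\prec_{s,i}$ to the branch $\mathbb{P}E_i$ need not coincide with $\prec_m$ (Section~6 is devoted to analysing exactly when it does). For pairs sharing the signature $s=x^\gamma E_i$ one has $p\preceq_p p'$ iff $lm(p)\preceq_{s,i}lm(p')$, which your sub-claim does not deliver. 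A concrete failure: take $d=1$, $\prec_m$ lex with $x\succ y$, $\prec_s$ lex with $y\succ x$, and $f_1=y-x$; then $p=(y,\,y^2-xy)$ is TRP with signature $yE_1$ and $lm(p)=xy$, while $p'=(y+x,\,y^2-x^2)$ has the same signature, $lm(p')=x^2$, and $p'\prec_p p$ (since $x^2\prec_{s,1}xy$). The paper's own proof is equally terse at this spot---its step~2 asserts that a top reduction yields a $\prec_p$-smaller pair, which is the very same unjustified compatibility claim---so this is a tacit hypothesis shared by both arguments; without it the clause $p\preceq_p p'$ simply fails, while the other two clauses (existence and the $\sim$-characterisation of TRP) survive as you proved them.
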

\begin{proof}
\item[1.]
Denote $s=x^{\alpha} E_i$. We start from the pair $p_0=(s,x^\alpha f_i)$, where $f_i$ is the $i$-th initial polynomial. If $p_0$ is top reducible, perform top reduction on $p_0$, to get a new pair $p_1$ signed same signature $s$. $p_1$ will $\prec_p p_0$. If $p_1$ is still top reducible, continue performing top reduction on $p_1$, .... Since $lm(p_i)$ can not always be smaller, we will finally get a top irreducible pair $p$ signed $s$. $p$ is of course not syzygy, and must be a TRP pair.
\item[2.]
For every top reducible pair $p'_0$ signed $s$, we can do the similar things to get another TRP pair signed $s$. Say to get $p'$. $p'\prec_p p'_0$. The remaining thing is to prove $p'\sim p$.
\item[3.]
Suppose $p=(u,v)$ and $p'=(u', v')$ are not similar. We will have $sig(p')=sig(p)$ and $lm(p')\neq lm(p)$. Then $lc(u')p-lc(u)p'$ will top reduce either $p$ or $p'$. Contradiction.
\end{proof}

\noindent\rule{8.5cm}{1pt}
\noindent\textbf{Algorithm (TRB)}

\vspace{.3cm}
\noindent\textbf{Input:} $f=(f_1,\cdots,f_d) \in \mP^d$,

Admissible orders $\prec_m$ and $\prec_s$ over $\mP$ and $\mP^d$ respectively.

\noindent\textbf{Output:} $DONE$, the set stored all the results.

\noindent\rule{8.5cm}{1pt}

\noindent\textbf{Procedures.}

\noindent\textbf{Step 0.}
    Set $DONE:=\phi$,
    $TODO:=\{[1,(E_i,f_i)],~i=1,\cdots,d\}$.

    ~~~~//$TODO$ stores multiplied pairs for the future computation.

\noindent\textbf{Step 1.}\

    \textbf{if} $TODO=\phi$, \textbf{then} output $DONE$.

     \textbf{else} $[m,p]:=Selection(TODO)$.
    ~//$[m,p]$ is a multiplied pair.

    ~~~~$TODO:=TODO\setminus\{[m,p]\}$.

    \textbf{end}

\noindent\textbf{Step 2.}\

    \textbf{if} $Criterions([m,p])=true$, \textbf{then} go to Step 1.

    \textbf{else}  $p':=Reductions(mp)$.

    \textbf{end}

\noindent\textbf{Step 3}\

    \textbf{if} $CheckStore(p')=true$ \textbf{then}

    ~~~~$TODO:=TODO\cup \{Jpair(p', p'')|p''\in DONE\}$,

    ~~~~$DONE:=DONE\cup\{p'\}$.

    \textbf{end}.

    Go to Step 1.

\noindent\rule{8.5cm}{1pt}

\vspace{.5cm}
The related functions are explained below.
\begin{itemize}
\item
\textbf{Selection} is a function to select a multiplied pair out from $TODO$ for the next computation. The selected pair always has the smallest signature w.r.t. $\prec_s$.
\item
\textbf{Criterions} is a function formed by some criterions. It returns true if the selected multiplied pair meets one of these criterions. We say the multiplied pair pass the criterions if it returns false.
\item
\textbf{Reductions} is a function of the composition of a series of top reductions. The output MUST BE top irreducible by any pairs.
\item
\textbf{CheckStore} is a function to detect whether the reduced pair need to be stored. Pair $p'$ will be blocked by this function if it is non-initial, and equivalent to $mp$, where $mp$ is the input of the corresponding Reductions.
\item
\textbf{JPair} is a function to output the J-pair of input pairs. It will return empty if either the input pairs are similar, or one of them is syzygy.
\end{itemize}


Call a TRB algorithm \textbf{regular}, if it satisfies the following:
For all initial polynomials $f$, when the algorithm terminates, all the TRB pairs (up to equivalence) have been computed out and stored to $DONE$.

\begin{thm}
Let $f$ be the initial polynomials. When a regular TRB algorithm terminates, all the polynomials of $DONE$ form a Gr\"{o}bner basis of $f$.
\end{thm}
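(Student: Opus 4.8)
The plan is to reduce the theorem to one claim: \emph{for every non-syzygy pair $p$ there is a TRB pair $q$ with $lm(q)\mid lm(p)$.} Granting this, the theorem follows quickly. Every nonzero $g\in\langle f_1,\dots,f_d\rangle$ can be written $g=u\cdot f$, so $(u,g)$ is a non-syzygy pair; the claim gives a TRB pair $q$ with $lm(q)\mid lm(g)$, and because the algorithm is regular some pair $q'\equiv q$ lies in $DONE$, so $lm(q')=lm(q)\mid lm(g)$. Hence the leading monomials of the polynomials of $DONE$ generate the leading-term ideal of $\langle f\rangle$; since those polynomials also lie in $\langle f\rangle$ (each is $u\cdot f$), the elementary fact ``$I\subseteq J$ and $\mathrm{in}(I)=\mathrm{in}(J)$ imply $I=J$'' shows they generate $\langle f\rangle$ and therefore form a Gr\"obner basis of it.

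To prove the claim I would use Noetherian induction on $sig(p)$ with respect to $\prec_s$, which is legitimate because $\prec_s$ is admissible, hence a well-order. Let $p$ be a non-syzygy pair with $s:=sig(p)$. If $s$ is a syzygy signature, choose a syzygy pair $p_z$ with $sig(p_z)=s$; subtracting the appropriate scalar multiple of $p_z$ from $p$ cancels the leading term of the $u$-component while leaving the $v$-component (hence $lm(p)$) unchanged, producing a non-syzygy pair of strictly smaller signature to which the induction hypothesis applies. If $s$ is non-syzygy, there are two subcases. If $p$ is top reducible, witnessed by a pair $p_2$ with $p\prec_p p_2$ and $lm(p_2)\mid lm(p)$ (both non-syzygy), then writing $lm(p)=m\,lm(p_2)$ and unwinding $p\prec_p p_2$ one cancels the common monomial $lm(p_2)$ and uses $1\preceq_m m$ to deduce $sig(p_2)\prec_s s$; the induction hypothesis applied to $p_2$ yields a TRB pair $q$ with $lm(q)\mid lm(p_2)\mid lm(p)$. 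If $p$ is not top reducible it is a TRP pair, and Proposition~\ref{trppro} (applicable since $s$ is non-syzygy) shows $p$ is equivalent to the minimal pair $T$ of signature $s$; if $T$ is a TRB pair we take $q=T$, and otherwise the failure of the TRB condition furnishes a TRP pair $r$, similar to $T$, whose signature $s_1$ properly divides $s$. Then $r\sim T$ together with $s_1\mid s$ forces $lm(r)\mid lm(T)=lm(p)$, while $s_1\prec_s s$, so the induction hypothesis applied to the non-syzygy pair $r$ closes this last case.

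The supporting facts are all routine: that $1$ is the least monomial for an admissible order (with equality only at $1$), that multiplication by a monomial is compatible with and reflects $\prec_s$ (so the cancellation steps are valid) and is compatible with $\prec_m$, and the ``same initial ideal'' lemma used at the end. The step I expect to be the genuine obstacle is the last subcase: one must translate ``$T$ is TRP but not TRB'' through the definitions of \emph{similar}, \emph{TRP similar set}, and \emph{TRB} into a concrete pair $r$ that is simultaneously non-syzygy, of strictly smaller signature, and satisfies $lm(r)\mid lm(p)$ — pinning down all three at once is where the argument has real content, and it is also where Proposition~\ref{trppro} is doing the heavy lifting by guaranteeing that the relevant TRP pairs of a given signature exist and are unique up to the similarity relation.
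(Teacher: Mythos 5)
Your proof is correct, and its skeleton is the same as the paper's: reduce everything to the claim that every non-syzygy pair has its leading monomial divisible by that of a TRB pair, then split on whether the pair is TRP or top reducible. The differences are in how the two halves are discharged. The paper handles the top-reducible case by a separate minimal-counterexample argument (Proposition~\ref{trtrb}) applied to the \emph{reduced} multiple $m_1p_1$, whereas you apply the induction hypothesis directly to the reducer $p_2$ after extracting $sig(p_2)\prec_s sig(p)$ from $p\prec_p p_2$ and $lm(p_2)\mid lm(p)$ --- a slightly cleaner step that avoids introducing the multiplied pair at all. More substantively, for the TRP case the paper simply asserts ``with the definition of TRB pair, there will exist TRB pair $p'$ to satisfy $p\equiv m'p'$,'' which is not immediate from the definition: one needs the descent you make explicit, namely that if a TRP pair is not TRB then its similar set contains a TRP pair of strictly smaller (properly dividing) signature, and that this descent terminates because $\prec_s$ is a well-order. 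Your single Noetherian induction on the signature absorbs Proposition~\ref{trtrb}, this descent, and the syzygy-signature case (which the paper never mentions, since Proposition~\ref{trppro} is stated only for non-syzygy signatures) into one argument, so it is both a reorganization and a genuine tightening of the paper's proof; you also spell out the routine final step (the stored pairs lie in $\langle f\rangle$ and their leading monomials generate the initial ideal), which the paper leaves implicit.
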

\begin{proof}
\item[1.]
For each non-zero polynomial $v\in \langle f\rangle$, it corresponds to a non-syzygy pair $p=(u,v)$.
\item[2.]
If $p$ is a TRP pair, with the definition of TRB pair, there will exist TRB pair $p'$ to satisfy $p\equiv m'p'$, where $m'$ is a monomial of $\mP$. Then $lm(p)$ can be reduced by $lm(p')$.
\item[3.]
If $p$ is not a TRP pair,  by the below proposition, $p$ can be top reduced by a TRB pair $p'$. Similarly $lm(p')|lm(p)$.
\end{proof}

\begin{prop}\label{trtrb}
A top reducible pair can always be top reduced by $TRB$.
\end{prop}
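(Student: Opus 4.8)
The plan is to reduce the claim to Proposition~\ref{trppro} by working with the signature of the pair in question. Let $p$ be a top reducible pair; by definition it is non-syzygy, so its signature $s=sig(p)$ is a non-syzygy signature. First I would invoke Proposition~\ref{trppro} to obtain a TRP pair $q$ with $sig(q)=s$ satisfying $q\preceq_p p'$ for every pair $p'$ signed $s$; in particular $q\preceq_p p$. Moreover, the last clause of Proposition~\ref{trppro} tells us precisely which pairs signed $s$ are themselves TRP: exactly those similar to $q$.

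Next I would split into two cases according to whether $q$ is a TRB pair. If $q$ is a TRB pair, I claim $q$ top reduces $p$: we have $q\preceq_p p$, and since $p$ is top reducible it is in particular not a TRP pair, so by the last clause of Proposition~\ref{trppro} $p$ is not similar to $q$, forcing $q\prec_p p$ (equality of pair order together with equal signatures would give $p\equiv q$, hence $p\sim q$). It remains to check the divisibility condition $lm(q)\mid lm(p)$; this is where the real content lies, and I would argue it from the fact that any pair signed $s$ that is strictly $\prec_p$-above the minimal TRP pair $q$ must have arisen from $q$ by a chain of top reductions (as in the proof of Proposition~\ref{trppro}), and each such reduction only multiplies $lm(q)$ by a monomial — so $lm(q)\mid lm(p)$. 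Hence $p$ is top reducible by $q\in TRB$.

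If $q$ is a TRP pair but \emph{not} a TRB pair, then by the definition of TRB pair there is another pair $q'$ in the same TRP similar set (so $q'\sim q$, $q'\in TRP$, $sig(q')=s'$) with $s'$ a proper divisor of $s=sig(q)$. Write $s=m\,s'$ for a non-trivial monomial $m$. One then passes to the multiplied pair $[m,q']$, whose signature is $s$ and whose leading monomial is $m\cdot lm(q')$; using $q'\sim q$, i.e. $lm(q)\,sig(q')=lm(q')\,sig(q)$, one computes $lm(mq')=lm(q)$, so $[m,q']$ and $q$ play the same role relative to $p$. Iterating this descent on the (well-founded, since $\prec_s$ is admissible) partial order of signatures, one reaches a TRB pair $r$ in finitely many steps with $lm(r)\mid lm(q)\mid lm(p)$ and $r\prec_p p$, and $r$ top reduces $p$.

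The main obstacle I anticipate is the divisibility bookkeeping: carefully tracking that passing from $q$ down to a genuine TRB pair $r$ (through the similar set, possibly several times) preserves a divisibility relation $lm(r)\mid lm(p)$ and the strict pair inequality $r\prec_p p$ needed to license a top reduction. The order-theoretic part (existence of the $\prec_p$-minimal pair, termination of the descent through proper signature divisors) is routine given that $\prec_m$ and $\prec_s$ are admissible, but making the interaction between the \emph{similar} relation, the pair order $\prec_p$, and monomial divisibility come out exactly right — rather than off by a unit or a monomial factor — is the delicate step.
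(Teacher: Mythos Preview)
There is a genuine gap: your candidate reducer $q$ points the wrong way. By definition, $p$ is top reducible by $q$ only when $p\prec_p q$ and $lm(q)\mid lm(p)$; but Proposition~\ref{trppro} gives $q\preceq_p p$ (the TRP pair is $\prec_p$-\emph{minimal} among pairs signed $s$), so after you show $p\not\sim q$ you get $q\prec_p p$, which is exactly the inequality that \emph{prevents} $q$ from top reducing $p$. The same issue persists after your descent to a TRB pair $r$: you end with $r\prec_p p$, not $p\prec_p r$. Moreover, the divisibility $lm(q)\mid lm(p)$ is not justified and need not hold: the chain of top reductions in the proof of Proposition~\ref{trppro} runs from a pair down to $q$, and each step merely makes the leading monomial $\prec_m$-smaller, with no divisibility relation forced either way. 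So neither ingredient of ``$q$ top reduces $p$'' is available.

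The paper's argument avoids this by never looking inside the signature $s$ of $p$ at all. It takes any pair $p_1$ that top reduces $p$, passes to the multiplied pair $m_1p_1$ with $lm(m_1p_1)=lm(p)$, and observes that $sig(m_1p_1)\prec_s sig(p)$; then a minimal-counterexample induction on the signature handles $m_1p_1$ (either it is already TRP, hence equivalent to some $m_2p_2$ with $p_2\in TRB$, or it is itself top reducible and the induction gives a TRB reducer). In both cases the resulting TRB pair $p_2$ satisfies $p_2\succ_p m_1p_1\succ_p p$ and $lm(p_2)\mid lm(m_1p_1)=lm(p)$, so it top reduces $p$. The key structural point you are missing is that a legitimate reducer of $p$ must, after scaling to match $lm(p)$, carry a \emph{strictly smaller} signature than $p$; that is why the induction is on signature rather than an analysis within the fixed signature $s$.
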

\begin{proof}
\item[1.] If there are top reducible pairs can not be top reduced by $TRB$, suppose $p$ has the smallest signature among them.
Say $p$ can be top reduced by $[m_1,p_1]$.
\item[2.]
If $m_1p_1$ is top reducible, since $sig(m_1p_1)\prec_s sig(p)$, by the assumption, $m_1p_1$ can be top reduced by $p_2\in TRB$. So can $p$.
\item[3.]
If $m_1p_1$ is top irreducible by $PAIR$, since $lm(m_1p_1)\neq 0$, $m_1p_1$ should be a TRP pair. Suppose $m_1p_1\equiv [m_2,p_2]\in M\times TRB$. Then $p$ can be top reduced by $p_2$.
\end{proof}

\section{The TRB-F5 Algorithm}
Usually, a TRB algorithm can be implemented by modifying three functions: Criterions, Reductions and CheckStore.
Now define the TRB-F5 algorithm as follows:
\begin{itemize}
\item Consider only homogeneous polynomials. Choose $\prec_m$ to be a homogeneous order. $\prec_s$ is defined as
\begin{equation}\label{f5precs}
x^{\alpha}E_i \prec_s x^{\beta}E_j\Longleftrightarrow\left\{
\begin{array}{l}
i>j \ \ or
\\
i=j, \ x^{\alpha}\prec_m x^{\beta}.
\end{array}
\right.
\end{equation}
\item
\textbf{F5Criterions} is composed by the Syzygy Criterion and Rewritten Criterion.
\item
We say $p_1$ is \textbf{F5 reducible} by $[m_2,p_2]$ if it is top reducible by $[m_2,p_2]$, and $[m_2,p_2]$ passes (does not meet) the F5Criterions.

\textbf{F5Reductions} is the function to perform F5 reductions (by $DONE$) as many as possible until the result can not be F5 reduced by $DONE$.
\item
\textbf{CheckStore-F5} copied the general CheckStore. It blocks the non-initial pair which is equivalent to the input of F5Reductions;
\end{itemize}

We now describe criterions used in the TRB-F5 algorithm. Define $index(p)=i$, where $sig(p)=x^{\alpha}E_i$.
We say $[m,p]$ meets the \textbf{Syzygy Criterion}, if there exists a pair $p_1=(u_1,v_1)\in DONE$, such that $i=index(p)<index(p_1)$ and $lm(v_1)E_i|sig(mp)$.

Define order $\prec_{F5}$ as follows. $p_1\prec_{F5} p_2$ if and only if
$$\left\{
\begin{array}{l}
index(p_1)>index(p_2) \ \ or
\\
index(p_1)=index(p_2), deg(p_1)<deg(p_2),
\end{array}
\right.$$
where $deg(p)=|\alpha|$, if $lm(p)=x^\alpha$.

A \textbf{Rewrite Rule List} (abbreviated by \textbf{Rules}) has been built to describe the Rewritten Criterion. Rules was initialized by empty. Before every Selection (of Step 1.) performed, do something on $[m,p]$, where $[m,p]$ satisfies
\begin{itemize}
\item
$[m,p]$ has the smallest $\prec_5$ order in $TODO$;
\item
$[m,p]$ passes F5Criterions.
\end{itemize}
We will replace all such $[m,p]$ in $TODO$ by $[1,mp]$, and prepend $mp$ at the head of Rules. After $mp$ was reduced by F5Reductions, replace $mp$ in Rules by $F5Reductions(mp)$.

Let $[m,p]$ be a multiplied pair. Find out in Rules the first pair whose signature can divide $sig(mp)$. Say it is $p'$. $[m,p]$ meets the \textbf{Rewritten Criterion} if $p\neq p'$.


\vspace{.3cm}
There are two things need to be determined in the following. One is to prove TRB-F5 is a regular TRB algorithm. The other one is to show that TRB-F5 is in the TRB-language of the F5 algorithm.

In the process of running a TRB algorithm, say $[m,p]$ is the last selected multiplied pair from Selection. Call signature $s$ \textbf{considered}, if $s\prec_s sig(mp)$; Call it \textbf{considering} or \textbf{unconsidered} if $s=$ or $\succ_s sig(mp)$ respectively.
\begin{lem}\label{f5singlev}
For each monomial $m$, at most one multiplied pair $[m_1,p_1]$ can meet all the following conditions:
\begin{enumerate}
\item
$lm(m_1p_1)=m$;
\item
$sig(m_1p_1)$ is considered;
\item
$p_1\in DONE$;
\item
$[m_1,p_1]$ passes F5Criterions.
\end{enumerate}
\end{lem}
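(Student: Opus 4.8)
The plan is to argue by contradiction: suppose two distinct multiplied pairs $[m_1,p_1]$ and $[m_2,p_2]$ both satisfy the four conditions for the same monomial $m$. Without loss of generality assume $sig(m_1p_1)\preceq_s sig(m_2p_2)$. The idea is to show that one of the two pairs would have been blocked by the F5Criterions (specifically the Rewritten Criterion), contradicting condition 4.

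First I would exploit condition 1: since $lm(m_1p_1)=lm(m_2p_2)=m$, the combination $lc(m_2)\,m_1p_1 - lc(m_1)\,m_2p_2$ (suitably normalized) either cancels the leading terms or not; in the F5 setting with the signature order (\ref{f5precs}), I want to compare signatures. If $sig(m_1p_1)=sig(m_2p_2)$, then $p_1\equiv p_2$ after scaling and the two multiplied pairs are equivalent, so — using how CheckStore-F5 works and the fact that only one representative per equivalence class survives into $DONE$ — I would derive $p_1=p_2$ and $m_1=m_2$, contradicting distinctness. So the interesting case is $sig(m_1p_1)\prec_s sig(m_2p_2)$.

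In that case the key step is to invoke the Rewrite Rule List. Both signatures $sig(m_1p_1)$ and $sig(m_2p_2)$ are \emph{considered}, so at the moments $p_1$ and $p_2$ were processed their reduced forms were prepended to Rules; since $sig(m_1p_1)\prec_s sig(m_2p_2)$ and both $p_1,p_2\in DONE$ with the stated divisibility $lm(m_1p_1)=lm(m_2p_2)$, I would show that $sig(p_1)\mid sig(m_2p_2)$ (because $sig(m_1p_1)=m_1\,sig(p_1)$ divides some multiple that in turn relates to $sig(m_2p_2)$ via the common leading monomial $m$ — this is where the F5 signature/degree bookkeeping is used). Then when $[m_2,p_2]$ is tested against Rules, the first rule whose signature divides $sig(m_2p_2)$ is not $p_2$ itself (it is $p_1$, or something entered no later than $p_1$), so $[m_2,p_2]$ meets the Rewritten Criterion, contradicting condition 4.

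The main obstacle I anticipate is the bookkeeping in that last paragraph: making precise the claim that "$sig(p_1)$ divides $sig(m_2p_2)$" from the sole hypothesis that the two multiplied pairs share a leading monomial. This requires carefully tracking how signatures of multiplied pairs factor as $m_i\cdot sig(p_i)$, using that $\prec_s$ here is the F5 order (comparing index first, then degree via a homogeneous $\prec_m$), and using that in the homogeneous setting the degrees of $m_1p_1$ and $m_2p_2$ agree since their leading monomials are equal. I expect that once homogeneity pins down the degrees, the divisibility of leading monomials plus the structure of Rules forces the Rewritten Criterion to fire on the larger-signature pair, closing the argument. A secondary subtlety is handling the "initial" pairs separately, since CheckStore-F5 treats non-initial equivalent pairs differently from initial ones; but initial pairs have signature $E_i$ and the argument there is immediate.
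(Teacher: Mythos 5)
There is a genuine gap, and it sits exactly where you anticipated: the claim that $sig(p_1)\mid sig(m_2p_2)$ does not follow from the hypotheses and is false in general. Signatures live in $\mP^d$: $sig(p_1)=x^{\alpha_1}E_{i_1}$ and $sig(m_2p_2)=m_2x^{\alpha_2}E_{i_2}$, and nothing forces $i_1=i_2$ — the two pairs may have entirely different indices even though $lm(m_1p_1)=lm(m_2p_2)$. Homogeneity pins down degrees of the polynomial parts, but gives no divisibility relation between the module components. So the strategy of having $p_1$ (as an entry of Rules) directly rewrite the larger-signature pair $[m_2,p_2]$ cannot be completed. Your handling of the equal-signature case is also off: CheckStore-F5 blocks a reduced pair equivalent to the input of its own F5Reductions call; it does not deduplicate $DONE$ by equivalence, and $m_1p_1\equiv m_2p_2$ does not give $p_1\equiv p_2$. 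The correct observation there is that the Rewritten Criterion admits only one ``first matching rule,'' so two equivalent multiplied pairs cannot both pass.

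The ingredient you are missing is the J-pair of $p_1$ and $p_2$. Since $lm(p_1)\mid m$ and $lm(p_2)\mid m$, the J-pair has the form $[m_1',p_1]$ with $m_1'\mid m_1$ (after the right choice of which pair is $\prec_p$-smaller), and it was placed in $TODO$ in Step 3 because both $p_1,p_2\in DONE$. Because it is a multiple of $p_1$ itself, the divisibility $sig(m_1'p_1)\mid sig(m_1p_1)$ is automatic — this is what replaces your unprovable cross-pair divisibility. One then checks $m_1'\neq 1$ (else $p_1$ would have been F5-reducible by $p_2$ when it left F5Reductions), that $[m_1',p_1]$ passes F5Criterions since $m_1'\mid m_1$, and hence that $m_1'p_1$ eventually gets prepended to Rules as a new rule distinct from $p_1$; that rule then rewrites $[m_1,p_1]$, contradicting condition 4. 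Note the contradiction lands on the $\prec_p$-smaller pair's multiple, not on the larger-signature one as in your plan.
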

\begin{proof}
\item[1.] Suppose there are two multiplied pairs $[m_1,p_1]$ and $[m_2,p_2]$ able to meet the above conditions.
If $p_1\sim p_2$, we have $[m_1,p_1]\equiv[m_2,p_2]$. One of them will meet the Rewritten Criterion. Contradiction. So $p_1\not\sim p_2$.
\item[2.]
    Suppose $[m'_1,p_1]$ is the J-pair of $p_1$ and $p_2$. We have $m'_1|m_1$, because $lcm(lm(p_1),lm(p_2))|m$. Then
$sig(m'_1p_1)\preceq_s sig(m_1p_1)$ is considered. $[m'_1,p_1]$ has been already stored into $TODO$, because $p_1$, $p_2\in DONE$.
\item[3.]
$m'_1\neq 1$. Otherwise $p_1$ can be F5 reduced by $p_2$ and can not be an output of F5Reductions.
\item[4.]
Since $m'_1|m_1$, $[m'_1,p_1]$ can also pass F5Criterions. When $sig(m'_1p)$ becomes the smallest (up to $\prec_{F5}$) in $TODO$, $m'_1p$ is prepended to Rules. Then $[m_1,p_1]$ will be rewritten. Contradiction.
\end{proof}


\begin{prop} \label{f5trb}
Let $p$ be a non-syzygy pair signed a considered signature. Then
\begin{enumerate}
\item If $p$ is top reducible, it can be F5 reduced by $DONE$;
\item If $p$ in Rules, it should also be in $DONE$;
\item If $p\in TRP$, there has $[m_1, p_1]\in M\times DONE$ can pass F5Criterions and satisfies $p\equiv m_1p_1$.
\end{enumerate}
\end{prop}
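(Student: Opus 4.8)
The three claims are intertwined and should be proved together by induction on the considered signatures, ordered by $\prec_s$. The plan is to process signatures in increasing order: assume claims (1)--(3) hold for every signature strictly below the current one $s$, and establish them for $s$. Since a signature becomes considered only after the corresponding multiplied pair has been selected and (if it passes the criterions) reduced and possibly stored, the inductive hypothesis gives us control over all pairs with smaller signature, which is exactly what Lemma~\ref{f5singlev} and Proposition~\ref{trtrb} need as inputs.

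For claim (2): if $p$ lies in Rules with signature $s$, then by construction of the Rewrite Rule List, $p$ was obtained as $F5Reductions(mp)$ for some multiplied pair $[m,p_0]$ that passed F5Criterions and was selected. The point to check is that the corresponding CheckStore-F5 test did not block $p$, i.e.\ that $p$ is either initial or \emph{not} equivalent to its pre-reduction input $mp$; in the latter case $p$ is stored, and in the former case it is stored at Step~0. The only subtlety is ruling out that $p\equiv mp$ with $p$ non-initial, which would mean $mp$ was already top irreducible --- but then $F5Reductions$ did nothing, and one argues via the inductive hypothesis (claim (3) applied at the smaller signature of whatever reduced it, or Lemma~\ref{f5singlev}) that a stored pair already represents it, so prepending to Rules without storing cannot happen for a genuinely new leading monomial.

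For claim (1): suppose $p$ is top reducible. By Proposition~\ref{trtrb} it is top reducible by some TRB pair $p_2$, and since $sig(p)=s$ is considered and $lm(p_2)\mid lm(p)$, one checks the reducing multiplied pair $[m_2,p_2]$ has signature $\preceq_s s$. If that signature is strictly smaller, claim (3) of the inductive hypothesis supplies a pair in $DONE$ that does the F5 reduction (after checking it passes F5Criterions, which is where one invokes that divisibility of signatures preserves passing the criterions, as in step~4 of Lemma~\ref{f5singlev}'s proof). If the reducing signature equals $s$, then $p$ and the reducer are similar, and a direct argument using the J-pair (exactly as in Lemma~\ref{f5singlev}) shows the relevant multiplied pair is in $TODO$ and hence, after selection and reduction, in $DONE$. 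Claim (3) then follows: if $p\in TRP$ with considered signature $s$, run the argument for the pair $p_0=(s, x^\alpha f_i)$; it reduces down (using claim (1) repeatedly at signature $s$, which is legitimate since reductions strictly decrease $lm$) to something top irreducible and equivalent to a stored $p_1$, and $[m_1,p_1]$ with $m_1 = lm(p)/lm(p_1)$ is the desired multiplied pair --- its passing F5Criterions is again the divisibility-of-signatures observation.

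The main obstacle is the bookkeeping in claim (3): one must ensure that the stored pair $p_1$ recovered at the end has signature dividing $sig(p)$ (so that $m_1$ is an honest monomial) \emph{and} that $[m_1,p_1]$ survives F5Criterions. The Syzygy Criterion part is immediate from $index(p_1)=index(p)$ (both equal $i$, since $sig(mp)$ divides $sig(p)$ forces the index to be preserved), so a syzygy witness for $[m_1,p_1]$ would be one for $p$ itself, contradicting $p$ non-syzygy. The Rewritten Criterion part is the delicate one: one needs that among all pairs in Rules whose signature divides $s$, the pair $p_1$ (or an equivalent one) is the \emph{first}, and this is precisely where the selection-by-smallest-$\prec_{F5}$-order discipline on Rules, together with Lemma~\ref{f5singlev}'s uniqueness, must be combined. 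I expect the proof to mirror steps~2--4 of Lemma~\ref{f5singlev} almost verbatim.
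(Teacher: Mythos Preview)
Your inductive framework (prove all three claims simultaneously, by induction on the considered signature) matches the paper's minimal-counterexample argument. However, two of your three sub-arguments have real gaps.

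\textbf{Claim (1).} Going through Proposition~\ref{trtrb} gives a TRB reducer $p_2$, and inductive claim~(3) applied to $p_2$ yields $[m_3,p_3]\in M\times DONE$ passing F5Criterions with $p_2\equiv m_3p_3$. But to F5-reduce $p$ you need the \emph{larger} multiplied pair $[m_2m_3,p_3]$, and there is no reason this still passes F5Criterions: the ``divisibility preserves passing'' direction in step~4 of Lemma~\ref{f5singlev} is downward ($m'\mid m$), not upward. The paper avoids this by applying the full inductive hypothesis directly to the multiplied reducer $m_1p_1$ (whose signature is automatically $\prec_s sig(p_0)$, so your ``equal signature'' case never occurs): if $m_1p_1$ is TRP, claim~(3) supplies $[m_2,p_2]$ already at the correct leading monomial; if $m_1p_1$ is top reducible, inductive claim~(1) supplies the F5-reducer at the correct leading monomial. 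Either way no upward multiplication is needed.

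\textbf{Claim (3).} Reducing the canonical pair $(s,x^{\alpha}f_i)$ via F5-reductions produces a TRP pair with signature $s$, i.e.\ something equivalent to $p$ itself --- nothing in that process places it (or a divisor of it) in $DONE$. Your sentence ``equivalent to a stored $p_1$'' is precisely the statement to be proved, and the subsequent paragraph about checking that $p_1$ is first in Rules presupposes you already have a candidate $p_1$. The paper's device goes in the opposite direction and is the missing idea: \emph{start} from the rewriter of $[1,p_0]$, i.e.\ the first $p_1$ in Rules with $sig(p_1)\mid s$. Then $[m_1,p_1]$ passes the Rewritten Criterion by construction, and by claim~(2) (at signature $\preceq_s s$) one has $p_1\in DONE$. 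It remains to show $p_0\sim p_1$: if not, $m_1p_1$ is top reducible (Proposition~\ref{trppro}), hence F5-reducible by some $[m_2,p_2]\in M\times DONE$ via claim~(1), and now $[m_1,p_1]$ and $[m_2,p_2]$ both satisfy the four hypotheses of Lemma~\ref{f5singlev} for the monomial $lm(m_1p_1)$, contradicting uniqueness. So Lemma~\ref{f5singlev} is not used to verify that your $p_1$ is first in Rules --- it is used to force the rewriter (which is first by definition) to be similar to $p_0$.
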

\begin{proof}
\item[1.] If there have considered non-syzygy pairs can not meet the conclusions, suppose $p_0$ has the smallest signature among them.
\item[2.]
If $p_0$ is top reducible by $[m_1,p_1]$, we have $sig(m_1p_1)\prec_s sig(p_0)$. Then $m_1p_1$ will meet the conclusion. This means there will have $[m_2,p_2]\in M\times DONE$ to satisfy that it passes F5Criterions, $lm(m_2p_2)=lm(m_1p_1)=lm(p_0)$ and $sig(m_2p_2)\preceq_s sig(m_1p_1)\prec_s sig(p_0)$.
\item[3.]
If $p_0$ is in Rules but not in $DONE$, since $sig(p_0)$ is considered, the only possibility is: After J-pair $[m'_0,p'_0]$ changed to $[1,m'_0p'_0]$, $p_0=m'_0p'_0$ could not be F5 reduced by $DONE$. Then it will at last be blocked by the CheckStore.

But by Part 2., this case can hardly happen. Because every J-pair is top reducible, $p_0$ should be F5 reducible by $DONE$.
\item[4.]
If $p_0$ is a TRP pair, suppose $[1,p_0]$ is rewritten by $[m_1,p_1]$. We assert $p_0\sim p_1$. Otherwise, by Proposition (\ref{trppro}), $m_1p_1$ will be top reducible and F5 reducible by $[m_2,p_2]$. Then, $[m_1,p_1]$ and $[m_2,p_2]$ will contradict to Lemma (\ref{f5singlev}).
\end{proof}

The above conclusion told us a lot of things. Suppose $sig(p)$ is considered.
\begin{itemize}
\item
If $p$ is top reducible, it can be F5 reduced by one and only one pair of $DONE$.
\item
Every output of the F5Reductions is top irreducible.
\item
If $p\in TRB$, it will $\equiv p_1\in DONE$. (Only $[1,p_1]$ can meet the third conclusion of Proposition (\ref{f5trb}).)
\item
TRB-F5 is a regular TRB algorithm.
\item
The function CheckStore will always return true.
\end{itemize}

\vspace{.3cm}
The following makes some comparison between the TRB-F5 and F5 algorithms. There is a lot of differences between them. Some large differences are listed below.
\begin{description}
\item[The selection order.] The F5 algorithm selects the smallest multiplied pair w.r.t. order $\prec_{F5}$ from $TODO$ for the next computation.
If two pairs are $\prec_{F5}$ equivalent, select the smaller signature one.

According to considering only homogenous polynomials, the above order is equivalent to the signature order which defined in TRB-F5.
\item[Action 1.] In the F5 algorithm, when the F5Reductions performing, some other multiplied pairs may be stored to $TODO$. Decompose F5Reductions as follows:
    $$p_0\xrightarrow[p'_0\in DONE]{} p_1\xrightarrow[DONE]{} \cdots\xrightarrow[p'_{k-1}\in DONE]{} p_k$$
    where $p_0$ and $p_k$ are the input and output of F5Reductions respectively, $p_0\xrightarrow[p'_1\in DONE]{} p_1$ represents that $p_0$ is F5 reduced into $p_1$ by $p'_1$. In the F5 algorithm, at the moment of each $p_i$ computed out, \textbf{temporary multiplied pairs} $[m,p]$ who satisfy the following conditions will also be stored to $TODO$.
\begin{enumerate}
\item $[m,p]\in M\times DONE$ passes F5Criterions;
\item $lm(mp)=lm(p_i)$;
\item $p\prec_p p_i$.
\end{enumerate}
\item[Action 2.] In the F5 algorithm, $[m,p]$ will be replaced in the following case:
\begin{itemize}
\item $[m,p]$ has the smallest $\prec_{F5}$ order in $TODO$;
\item $[m,p]$ passes F5Criterions;
\item $[m,p]$ is the J-pair of $p$ and $p_1\in DONE$, and $p_1$ can F5 reduce  $mp$.
\end{itemize}
\item[Action 3.]
Instead of $[1,mp]$, the F5 algorithm will replace $[m,p]$ by $[1,mp-m_1p_1]$, and prepend $mp-m_1p_1$ to the Rules.
\end{description}

From the Action 1., temporary multiplied pairs in truth can only generated from the function output, $p_k$. These pairs will be generated again in the Step 3.
\begin{prop}
Suppose $p_i, 0<i<k$, is a middle reduction result of F5Reductions. No multiplied pair can satisfy all the conditions in Action 1.
\end{prop}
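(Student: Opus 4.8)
The plan is to argue by contradiction: suppose some middle reduction result $p_i$ with $0<i<k$ admits a temporary multiplied pair $[m,p]$ satisfying all three conditions of Action 1, i.e. $[m,p]\in M\times DONE$ passes F5Criterions, $lm(mp)=lm(p_i)$, and $p\prec_p p_i$. The key observation is that the F5Reductions chain

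$$p_0\xrightarrow[p'_0\in DONE]{} p_1\xrightarrow[DONE]{} \cdots\xrightarrow[p'_{k-1}\in DONE]{} p_k$$

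only ever modifies the leading term of the pair being reduced while keeping the signature fixed; hence all the $p_i$ share the common considered signature $sig(p_0)$, and in particular $sig(mp)=sig(p_i)=sig(p_0)$ is considered. So $[m,p]$ is a multiplied pair of the kind governed by Lemma \ref{f5singlev}: its signature is considered, $p\in DONE$, and it passes F5Criterions.

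First I would pin down $m$ as $m=lm(p_i)/lm(p)$, which is a genuine monomial since $p\prec_p p_i$ forces $lm(p)\,|\,lm(p_i)$ (this is exactly the top-reducibility condition, once one checks the signature inequality $lm(p_i)sig(p)\prec_s lm(p)sig(p_i)$ coming from $p\prec_p p_i$ together with $sig(mp)$ being considered). Then $[m,p]$ shows that $p_i$ is top reducible by a pair of $DONE$ that passes F5Criterions — that is, $p_i$ is F5 reducible by $DONE$. But $p_i$ with $i<k$ is, by construction of the chain, \emph{not} the final output; the reduction did not stop at $p_i$, so a priori this is consistent. The real leverage is Lemma \ref{f5singlev}: since $lm(mp)=m=lm(p_i)$ and $sig(mp)$ is considered, $[m,p]$ is \emph{the unique} multiplied pair with this leading monomial, considered signature, member of $DONE$, passing F5Criterions. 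Now I would exhibit a competing one. The step $p_{i-1}\xrightarrow[p'_{i-1}\in DONE]{}p_i$ (using $p_{i-1}$ if $i\ge 1$, or if $i=0$ is excluded so $i\ge1$) means $p_{i-1}$ was F5 reduced by some $[m',p'_{i-1}]$ with $p'_{i-1}\in DONE$, passing F5Criterions, with $lm(m'p'_{i-1})=lm(p_{i-1})$ and $p'_{i-1}\prec_p p_{i-1}$ — but after this reduction the new leading monomial is $lm(p_i)$, strictly smaller than $lm(p_{i-1})$, and by Proposition \ref{f5trb}(1) applied to the (considered, non-syzygy) pair $p_i$, since $p_i$ is itself top reducible it is F5 reducible by one and only one pair of $DONE$; I would then chase indices to produce a \emph{second} witness distinct from $[m,p]$ — namely the pair of $DONE$ that F5-reduces $p_i$ in the actual F5Reductions continuation (the step $p_i\xrightarrow{}p_{i+1}$, which exists precisely because $i<k$) has a leading monomial equal to $lm(p_i)=m$ as well, giving two distinct pairs and contradicting Lemma \ref{f5singlev}.

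The main obstacle, I expect, is making the ``second witness'' genuinely distinct from $[m,p]$ and verifying it meets all four hypotheses of Lemma \ref{f5singlev} cleanly — in particular, the reducer used inside F5Reductions at step $i$ passes F5Criterions and lies in $DONE$ by definition of F5 reducibility, but one must confirm its associated multiplied pair has leading monomial exactly $m$ and a \emph{considered} signature (which follows since its signature is $\preceq_s sig(p_i)=sig(p_0)$, and strict inequality would be needed to distinguish it from $[m,p]$; equality would force, via Lemma \ref{f5singlev}'s uniqueness, that it equals $[m,p]$, but then $[m,p]$ being the actual reducer contradicts $p\prec_p p_i$ versus the reduction producing a strictly smaller monomial — a bookkeeping point that needs care). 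An alternative, possibly cleaner route avoiding the chase: observe directly that if $p_i$ ($i<k$) had a temporary pair $[m,p]$ as in Action 1, then $p_i$ would be F5 reducible by $DONE$, but by Proposition \ref{f5trb}(1) and the remark following it, ``every output of F5Reductions is top irreducible'' — and more to the point, the F5Reductions chain by its very definition continues reducing as long as the current pair is F5 reducible by $DONE$; so $p_i$ being F5 reducible by $DONE$ is automatic for $i<k$ and carries no contradiction by itself, which is why the uniqueness Lemma \ref{f5singlev} (two witnesses) rather than mere reducibility is the essential ingredient, and I would make that the backbone of the write-up.
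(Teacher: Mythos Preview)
There is a genuine gap. Your argument rests on the claim that ``$sig(mp)=sig(p_i)=sig(p_0)$ is considered,'' but condition~(2) of Action~1 says $lm(mp)=lm(p_i)$, not $sig(mp)=sig(p_i)$. In fact the opposite signature relation holds: since $p\prec_p p_i$ and $mp\sim p$, we have $mp\prec_p p_i$; combining this with $lm(mp)=lm(p_i)$ gives $sig(mp)\succ_s sig(p_i)=sig(p_0)$. Thus $sig(mp)$ is \emph{unconsidered}, and $[m,p]$ fails hypothesis~(2) of Lemma~\ref{f5singlev}, so the uniqueness lemma gives you nothing here. The same confusion shows up a few lines later: you write that ``$[m,p]$ shows that $p_i$ is top reducible by a pair of $DONE$,'' but top reducibility of $p_i$ by $p$ would require $p_i\prec_p p$, whereas condition~(3) says $p\prec_p p_i$. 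The direction is exactly reversed: it is $mp$ that is top reducible by $p_i$ (and by the actual reducer $p'_i$), not $p_i$ by $p$.

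The paper's proof proceeds along different lines. It observes that the actual F5-reducer $[m'_i,p'_i]$ of $p_i$ (which exists since $i<k$) also F5-reduces $mp$, because $lm(mp)=lm(p_i)$ and $p\prec_p p_i\prec_p p'_i$. Then, since both $p$ and $p'_i$ lie in $DONE$, their J-pair $[m',p]$ (with $m'\mid m$) was already placed in $TODO$ and, having smaller $\prec_{F5}$-order, has already been prepended to Rules. This forces $[m,p]$ to meet the Rewritten Criterion (if $m'\neq 1$) or forces $p\notin DONE$ (if $m'=1$), contradicting the assumption that $[m,p]$ passes F5Criterions with $p\in DONE$. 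The essential mechanism is the Rules list and the Rewritten Criterion, not Lemma~\ref{f5singlev}.
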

\begin{proof}
\item[1.] Suppose $[m,p]$ satisfies $[m,p]\in M\times DONE$ passes the F5Criterions, $lm(mp)=lm(p_i)$ and $p\prec_p p_i$.
We know that $p_i$ is F5 reduced by $[m'_{i},p'_{i}]\in M\times DONE$. $[m,p]$ can also be F5 reduced by $[m'_{i},p'_{i}]$.
\item[2.] Since both $p$ and $p'_{i}\in DONE$, their J-pair $[m',p]$ had be stored to $TODO$, where $m'|m$.

$sig(m'p)\preceq_{F5} p_i$, so $m'p$ has been prepended into Rules. If $m'=1$, $p$ will not be in $DONE$; If $m'\neq 1$, $[m,p]$ will be rewritten.
\end{proof}
%
%

\begin{prop}
Every $[m,p]$ who satisfies the first two conditions of Action 2, will always satisfy the third condition.
\end{prop}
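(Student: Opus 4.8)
The plan is to first pin down the shape of $[m,p]$. Every multiplied pair that ever enters $TODO$ is either one of the initial pairs $[1,(E_i,f_i)]$ --- for which the manipulations of Actions 2 and 3 degenerate to the identity and already agree with TRB-F5 --- or a J-pair produced in Step 3. Hence we may assume $[m,p]$ is the J-pair of $p$ and some $p_1\in DONE$, which by the definition of J-pair forces $p\prec_p p_1$ and $m=lcm(lm(p),lm(p_1))/lm(p)$. This is exactly the first half of condition 3, so the task reduces to producing \emph{some} $p_1'\in DONE$ such that $[m,p]$ is the J-pair of $p$ and $p_1'$ and $p_1'$ can F5 reduce $mp$.

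Next I would carry out the elementary signature bookkeeping. Writing $L=lcm(lm(p),lm(p_1))$ we have $lm(mp)=L$ and $lm(p_1)\mid L$, and multiplying the defining inequality of $p\prec_p p_1$ by the monomial $m$ shows $mp\prec_p p_1$; thus $mp$ is top reducible (by $p_1$). Since $mp$ is top reducible and $sig(mp)$ is the smallest signature in $TODO$ (so that every strictly smaller signature is already considered), Proposition \ref{f5trb}(1) together with Lemma \ref{f5singlev} yields a pair $[m_2,p_2]\in M\times DONE$ that F5 reduces $mp$: it passes the F5Criterions, $lm(m_2p_2)=L$, $mp\prec_p p_2$, and $sig(m_2p_2)\prec_s sig(mp)$. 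Cancelling the factor $m$ from $mp\prec_p p_2$ (an admissible order permits this) gives $p\prec_p p_2$, so the J-pair of $p$ and $p_2$ has the form $[m',p]$ with $m'=lcm(lm(p),lm(p_2))/lm(p)$; and since $lm(p_2)\mid L$ we get $m'\mid m$. If I can show $m'=m$, then $[m,p]$ \emph{is} the J-pair of $p$ and $p_2$ and $p_2$ F5 reduces $mp$, which is condition 3 with $p_1'=p_2$.

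The heart of the matter is therefore $m'=m$, which I would establish by contradiction. Assume $m'$ is a proper divisor of $m$. First, $m'\neq 1$: otherwise $lm(p_2)\mid lm(p)$, and since $p\prec_p p_2$ and every divisor of a criterion-passing multiplier is again criterion-passing (the step used for $[m_1',p_1]$ in the proof of Lemma \ref{f5singlev}), $p$ itself would be F5 reducible by $p_2$, contradicting that the $DONE$-pair $p$ is top irreducible. Now, by homogeneity, $m'\subsetneq m$ gives $deg(m'p)<deg(mp)$ and $sig(m'p)\prec_s sig(mp)$, so $[m',p]\prec_{F5}[m,p]$; and $m'\mid m$ gives the divisibility $sig(m'p)\mid sig(mp)$. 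The multiplied pair $[m',p]$, being the J-pair of the two $DONE$-pairs $p$ and $p_2$, was placed into $TODO$ in some earlier Step 3 (strictly after $p$ entered $DONE$, hence after $p$ was prepended to the rule list); because $[m,p]$ currently has the smallest signature in $TODO$, $[m',p]$ can no longer be there, so it was already selected and removed. If it passed the F5Criterions when selected, then $m'p$ (and afterwards $F5Reductions(m'p)$) was prepended to the rule list after $p$ was, with signature dividing $sig(mp)$ and distinct from $p$ (as $m'\neq 1$); hence $p$ is not the most recent rule dividing $sig(mp)$, i.e.\ $[m,p]$ meets the Rewritten Criterion --- a contradiction. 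If instead $[m',p]$ met the F5Criterions when selected, then either it met the Rewritten Criterion, so some rule more recent than $p$ divides $sig(m'p)\mid sig(mp)$ and again $[m,p]$ meets the Rewritten Criterion, or it met the Syzygy Criterion, so the witnessing $p_3=(u_3,v_3)\in DONE$ satisfies $lm(v_3)E_{index(p)}\mid sig(m'p)\mid sig(mp)$ and $[m,p]$ meets the Syzygy Criterion. Each case contradicts that $[m,p]$ passes the F5Criterions, so $m'=m$ and we are done.

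The step I expect to be the main obstacle is exactly this last contradiction: it hinges on the precise timing of the rewrite-rule mechanism (rules are prepended just before the Selection they govern, and $DONE$ is contained in the rule list), on identifying the Selection order with the signature order under homogeneity, and on the divisibility $sig(m'p)\mid sig(mp)$ that transports both the Rewritten and the Syzygy conditions from the smaller J-pair $[m',p]$ up to $[m,p]$. A minor, purely technical point to address is that Proposition \ref{f5trb}(1) is phrased for ``considered'' signatures while $sig(mp)$ is at that instant only ``considering''; but its inductive proof goes through unchanged, since all strictly smaller signatures are already considered.
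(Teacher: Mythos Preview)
Your argument is correct and follows the same route as the paper: obtain an F5-reducer $p_2\in DONE$ of $mp$ via Proposition~\ref{f5trb}, form the J-pair $[m',p]$ of $p$ and $p_2$ with $m'\mid m$, and rule out $m'\neq m$ by showing it would force $[m,p]$ to meet the F5Criterions. Your version is in fact more careful than the paper's two-line proof, since you explicitly handle $m'\neq 1$ and split the contradiction into the cases where $[m',p]$ passed or failed the criteria at its own selection, whereas the paper tacitly assumes $m'p$ was prepended to Rules.
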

\begin{proof}
\item[1.] By above discussions, we know that J-pair $[m,p]$ is F5 reducible by $DONE$. Say $mp$ can be F5 reduced by $[m_1,p_1]\in M\times DONE$. We will prove that $[m,p]$ should be the J-pair of $p$ and $p_1$.
\item[2.]
    Denote $[m',p]$ the J-pair of $p$ and $p_1$, where $m'|m$. If $m'\neq m$, with the algorithm, $m'p$ was already prepended to Rules, and $[m,p]$ was rewritten. Contradiction.
\end{proof}

For the Action 3., prepending $mp$ or $mp-m_1p_1$ to Rules can hardly bring real difference in the algorithm, because they have a same signature. If $[m,p]$ was replaced by $[1,mp]$, it has been deduced that $mp$ can be F5 reduced by only one pair $p_1$. The algorithm will do this reduction first to change $mp$ to $mp-m_1p_1$.

\section{The Extended F5 algorithm}
An extended F5 algorithm has been proposed since 2010 (see \cite{hashemi}). The main improvement to F5 is modifying the signature order for efficiency.

The \textbf{TRB-EF5} algorithm consists of three functions: \newline \{\textbf{EF5Criterions, F5Reductions, CheckStore}\}. \newline The F5Reductions and CheckStore have been introduced already. We need only define EF5Criterions here.

\textbf{EF5Criterions} is composed of the ESyzygy and ERewritten Criterion.
ESyzygy Criterion is a modification of the Syzygy Criterion to suit new signature orders.
Describe it with the \textbf{Syzygy Signatures Set} (\textbf{Syzygies}). Syzygies stores all the signatures as $lm(p)E_i$, where $i\in {\mathbb N}$, $(E_i,f_i)\prec_p p\in DONE$. $[m_0,p_0]$ meets the \textbf{ESyzygy Criterion} if $sig(m_0p_0)$ is divided by one of signatures in Syzygies.

ERewritten Criterion is proposed to simplify the Rewritten Criterion. Used the ERewritten Criterion, we need not to replace any pairs in $TODO$. $[m,p]$ meets the \textbf{ERewritten Criterion} if and only if there is a pair $p'\in DONE$ satisfied $sig(p')|sig(mp)$ and $sig(p')\succ_s sig(p)$.

The ERewritten Criterion in truth is a special case of the Rewritten Criterion. In the Rewrite Rules of the F5 algorithms, among $\prec_{F5}$ equivalent pairs, it has no requirement for them to be prepended first. The ERewritten Criterion let pairs be prepended with the signature order.

The EF5 algorithm consider also homogeneous polynomials and a homogeneous monomial order $\prec_m$.
Similarly we can prove TRB-EF5 is also a regular TRB algorithm. The main improvement to the F5 algorithm is the modifications of the signature order.
In \cite{hashemi}, two modified signature orders were proposed. They were defined as
$$
x^{\alpha}E_i\prec_s x^{\beta}E_j \Leftrightarrow
 \left\{
\begin{array}{l}
lm(x^{\alpha}f_j)\prec_m lm(x^{\beta}f_i) \ \ or
\\
lm(x^{\alpha}f_j)=lm(x^{\beta}f_i), lm(f_j)\prec_m lm(f_i);
\end{array}
\right.
$$
or
$$x^{\alpha}E_i\prec_s x^{\beta}E_j \Leftrightarrow
 \left\{
\begin{array}{l}
deg(x^\alpha f_i)<deg(x^{\beta}f_j) \ \ or
\\
deg(x^\alpha f_i)=deg(x^{\beta}f_j), x^\alpha\prec_m x^\beta \ \ or
\\
deg(x^\alpha f_i)=deg(x^{\beta}f_j), x^\alpha=x^\beta, i<j.
\end{array}
\right.
$$
With these modifications, the new algorithms experimentally terminated at a lower degree than F5.

\section{The TRB-GVW Algorithm}
The GVW algorithm was presented recently. It is in fact another regular TRB algorithm. Let us define the TRB-GVW algorithm below.

\begin{itemize}
\item $\prec_m$ and $\prec_s$ are arbitrary admissible orders.
\item
\textbf{GVWCriterions} is composed of the GCyzygy Criterion and Signature Criterion.
\item
\textbf{TopReductions:} Do top reductions (by $DONE$) as many as possible until the result can not be top reduced by $DONE$.
\item
Pair $p$ will be blocked by the \textbf{CheckStore-GVW}, if and only if $p\equiv [m_1,p_1]\in M\times DONE$.
\end{itemize}

$[m,p]$ meets the \textbf{GSyzygy Criterion}, if $sig(mp)$ is divided by one of signatures in Syzygies, where \textbf{Syzygies} stores all the syzygy signatures in $DONE$, and all the following signatures:
$max(sig(v_2p_1), sig(v_1p_2))$, where $p_i=(u_i,v_i)\in DONE$, $sig(v_2p_1)\neq sig(v_1p_2)$.

$[m,p]$ meets the \textbf{Signature Criterion}, if another pair $p'$ signed the same signature $sig(mp)$, passed this function already.

\begin{lem}
Let $s$ be a non-initial and TRB signature. There will have a J-pair of two TRB pairs signed $s$.
\end{lem}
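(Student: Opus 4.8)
The plan is to start from a TRB pair $p^*=(u^*,v^*)$ with $sig(p^*)=s$ (one exists because $s$ is a TRB signature), to use Proposition~\ref{trppro} — which says $p^*$ has $\preceq_m$-smallest leading monomial among all pairs signed $s$ — and to realize $s$ as the signature of a J-pair. Write $s=x^{\alpha}E_i$; non-initial means $x^{\alpha}\neq 1$, so pick a variable $x_j\mid x^{\alpha}$ and set $\bar s=s/x_j$. First note $\bar s$ is not a syzygy signature: otherwise there is a syzygy pair $(w,0)$ with $lm(w)=x_j\bar s=s$, and $p^*-\tfrac{lc(u^*)}{lc(w)}(w,0)$ is a pair with the same $v$-part $v^*$ (hence the same leading monomial as $p^*$) but with signature strictly below $s$, so it would top reduce $p^*$, contradicting $p^*\in TRP$.

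Next I would reduce the whole lemma to the claim $(\star)$: \emph{if $\bar p\in TRB$, $\tau$ is a monomial with $\tau\neq 1$, and $\tau\bar p$ is top reducible, then $\tau\cdot sig(\bar p)$ is the signature of a J-pair of two TRB pairs}. To feed $(\star)$ a pair signed $s$: by Proposition~\ref{trppro} (and $\bar s$ non-syzygy) there is a TRP pair signed $\bar s$; since $\sim$ is an equivalence relation its TRP similar set is a genuine class, and following similar TRP pairs of ever smaller signature one reaches a TRB pair $\bar p=(\bar u,\bar v)$ with $sig(\bar p)\mid\bar s$, so $sig(\bar p)\mid s$ and $\tau:=s/sig(\bar p)$ is a monomial $\neq 1$ (with $x_j\mid\tau$). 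Finally $\tau\bar p$ is signed $s$ and is top reducible, because $\tau\,lm(\bar v)\neq lm(v^*)$: equality would give $\bar p\sim p^*$ with $sig(\bar p)$ a \emph{proper} divisor of $sig(p^*)=s$, impossible for the TRB pair $p^*$; hence $\tau\,lm(\bar v)\succ_m lm(v^*)$ and $\tau\bar p\notin TRP$. (When $\bar s$ itself is a TRB signature, the TRP pair signed $\bar s$ is already TRB, so one may take $\bar p$ signed exactly $\bar s$ and $\tau=x_j$ of degree $1$; the real work is the case in which $s/x_j$ fails to be a TRB signature for every $x_j\mid x^{\alpha}$.)

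I would prove $(\star)$ by induction on $deg(\tau)$. By Proposition~\ref{trtrb} the pair $\tau\bar p$ is top reducible by a TRB pair $r$, so $\tau\bar p\prec_p r$ and $lm(r)\mid\tau\,lm(\bar v)$; cancelling $\tau$ gives $\bar p\prec_p r$ (hence $\bar p\not\sim r$, a strict order excluding similarity), and since $r$ cannot top reduce the TRP pair $\bar p$ we get $lm(r)\nmid lm(\bar v)$. Therefore $\mu:=lcm(lm(\bar v),lm(r))/lm(\bar v)$ is a monomial with $1\neq\mu\mid\tau$, and the J-pair of the two TRB pairs $\bar p\prec_p r$ is precisely $[\mu,\bar p]$, of signature $\mu\cdot sig(\bar p)$. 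When $deg(\tau)=1$ this finishes the proof, since $\mu\mid\tau$ and $\mu\neq 1$ force $\mu=\tau$, so the J-pair is signed $\tau\,sig(\bar p)$; likewise whenever $\mu=\tau$ we are done.

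The remaining case, $\mu$ a proper divisor of $\tau$, is the obstacle. Here one top reduces $\tau\bar p$ by $r$ and observes that the reduct — still of signature $s$ — equals $(\tau/\mu)$ times the top reduction of the J-pair value $\mu\bar p$ by $r$; following the reduction of $\tau\bar p$ all the way down to the TRP pair of signature $s$ while keeping the lcm monomials aligned should let one re-enter the induction with a strictly smaller gap and finally exhibit a J-pair of two TRB pairs signed $s$. Making this gluing airtight — equivalently, choosing $\bar p$ and $r$ economically (for instance $sig(\bar p)$ maximal among TRB signatures properly dividing $s$ and $lm(r)$ maximal), or recasting the whole argument as a minimal counterexample with respect to $\prec_s$ — is the step I expect to demand the most care.
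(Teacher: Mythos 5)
There is a genuine gap, and you have located it yourself: the case in which $\mu$ properly divides $\tau$ is not a technical loose end but the entire content of the lemma, and your sketch of ``re-entering the induction with a strictly smaller gap'' does not close. The difficulty is that when the J-pair of $\bar p$ and $r$ is $[\mu,\bar p]$ with $\mu$ a proper divisor of $\tau$, what you have produced is a J-pair of two TRB pairs signed $\mu\cdot sig(\bar p)\neq s$; to continue you would have to restart from a \emph{different} TRB pair, and nothing in your setup controls how the new configuration relates to $deg(\tau)$. Your auxiliary claim $(\star)$ is also pitched at the wrong level of generality: for an arbitrary TRB pair $\bar p$ with $sig(\bar p)$ properly dividing $s$ there is no reason the J-pair it enters should be signed exactly $s$, so inducting on $deg(\tau)$ for arbitrary $\bar p$ is not the right organizing principle. (The preliminary reductions --- $\bar s$ non-syzygy, descending to a TRB pair $\bar p$ with $sig(\bar p)\mid s$ properly, $\tau\bar p$ top reducible --- are fine but are also more machinery than needed: the set of TRB pairs whose signature properly divides $s$ is nonempty simply because every $E_i$ is a TRB signature and $s$ is non-initial.)

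The missing idea is an extremal choice that makes your bad case vacuous. Take $p_1$ to be the $\prec_p$-\emph{smallest} TRB pair among those whose signature divides $s$ properly, and let $m_1$ satisfy $sig(m_1p_1)=s$. Since $s$ is a TRB signature, $m_1p_1$ cannot be a TRP pair (its similar TRP set would contain $p_1$, whose signature properly divides $s$), so by Proposition~\ref{trtrb} it is top reduced by some $p_2\in TRB$, giving a J-pair $[m_1',p_1]$ with $m_1'\mid m_1$ --- this is your $[\mu,\bar p]$. Now suppose $m_1'\neq m_1$. The J-pair value $m_1'p_1$ is top reducible, so the TRP pair $p_3$ signed $sig(m_1'p_1)$ satisfies $p_3\prec_p p_1$, and $p_3\equiv[m_4,p_4]$ for some $p_4\in TRB$ with $sig(p_4)\mid sig(p_3)\mid s$ properly and $p_4\sim p_3\prec_p p_1$; this contradicts the minimality of $p_1$. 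Hence $m_1'=m_1$ and $[m_1,p_1]$ is a J-pair of the two TRB pairs $p_1,p_2$ signed $s$. In short: replace the induction on $deg(\tau)$ by $\prec_p$-minimality of the base TRB pair, and the obstacle you flagged disappears rather than needing to be glued over.
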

\begin{proof}
\item[1.]
All the $E_i$ are TRB signatures, so there have TRB pairs to satisfy that their signatures divide $s$ and $\neq s$. Suppose $p_1$ has the smallest $\prec_p$ order among them. Let $m\in M$ satisfy $sig(m_1p_1)=s$.
\item[2.]
$m_1p_1$ is top reducible, or $sig(m_1p_1)$ is not a TRB signature. Say $m_1p_1$ can be top reduced by $[m_2,p_2]\in M\times TRB$. Denote $[m'_1,p_1]$ the J-pair of $p_1$ and $p_2$, where $m'_1|m_1$.
\item[3.]
Suppose $p_3$ is a TRP pair signed $sig(m'_1p_1)$, $p_3\equiv [m_4,p_4]\in M\times TRB$. Since $m'_1p_1$ is top reducible, we have $p_4\sim p_3\prec_p p_1$ and $sig(p_4)|sig(p_3)|s$. With the definition of $p_1$, $sig(p_4)$ must equal $s$ and $m'_1=m_1$. Then $[m'_1,p_1]$ is what we need.
\end{proof}

\begin{prop}
Let $p$ be a non-syzygy pair signed a considered signature. There will have a multiplied pair $[m_1,p_1]\in M\times DONE$ to satisfy $lm(m_1p_1)=lm(p)$ and $p_1\succeq_p p$. Where $p_1\sim p$ if and only if $p\in TRP$.
\end{prop}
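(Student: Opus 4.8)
The plan is to mimic the structure of Proposition \ref{f5trb} and its GVW-analogue: argue by taking a minimal counterexample with respect to $\prec_s$ and run the reduction/criterion machinery of TRB-GVW on it. First I would suppose the claim fails, and among all non-syzygy pairs $p$ with a considered signature for which no suitable $[m_1,p_1]\in M\times DONE$ exists (with $lm(m_1p_1)=lm(p)$, $p_1\succeq_p p$, and the TRP-equivalence clause), pick one whose signature $s=sig(p)$ is $\prec_s$-minimal. Since $s$ is considered, the multiplied pair with signature $s$ that was selected has already been processed by TopReductions and, unless blocked, stored; the branches to analyze are (a) it passed GVWCriterions and was reduced and stored, (b) it met the GSyzygy Criterion, or (c) it met the Signature Criterion.

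The core case is (a): let $q$ be the pair actually stored in $DONE$ from processing signature $s$. By construction $q$ is top-irreducible by $DONE$ and $sig(q)=s=sig(p)$. If $p\in TRP$, then by Proposition \ref{trppro} applied to the signature $s$, both $p$ and $q$ are TRP pairs signed $s$, hence $p\sim q$; taking $m_1=1$, $p_1=q$ gives $lm(m_1p_1)=lm(p)$ and $p_1\sim p$, and one checks $p_1\succeq_p p$ from $p_1\in TRP$ and Proposition \ref{trppro} again. If $p\notin TRP$, then $p$ is top reducible, so by Proposition \ref{trtrb} it is top reduced by some $[m',p']\in M\times TRB$; since the signature of $m'p'$ is $\prec_s s$ it is considered, so minimality of $s$ applies to $m'p'$, yielding $[m_2,p_2]\in M\times DONE$ passing the criteria with $lm(m_2p_2)=lm(m'p')=lm(p)$ and $p_2\succeq_p m'p'\succ_p$ (from the top-reduction inequality) — giving the required $[m_1,p_1]$ and the strict inequality $p_1\succ_p p$, consistent with $p\notin TRP$. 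For case (b), a syzygy signature dividing $s$ would make $s$ itself syzygy, contradicting that $p$ is non-syzygy signed $s$; the $max(sig(v_2p_1),sig(v_1p_2))$ entries are handled exactly as in the TRB-F5 argument, reducing to an already-stored pair of strictly smaller signature. For case (c), another pair signed $s$ passed the function, so that pair's reduction output lies in $DONE$ (via CheckStore-GVW, since it is $\equiv$ to a multiple of a stored pair), and we are back in case (a) with that pair.

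The main obstacle I anticipate is the careful bookkeeping of the pair order $\prec_p$ versus the signature order $\prec_s$ in the top-reduction step: establishing $p_1\succeq_p p$ (and strictness exactly when $p\notin TRP$) requires combining the defining inequality $p\prec_p [m',p']$ of a top reduction with the transfer of the minimality hypothesis down to $m'p'$, and then checking that the pair delivered by minimality still dominates $p$ in $\prec_p$ rather than merely in $\prec_s$. A secondary subtlety is making sure that the pair stored for signature $s$ in case (a) is genuinely $\equiv p$ when $p\in TRP$ — i.e., that only the "trivial multiple" $[1,\cdot]$ can realize the TRP-equivalence — which is where Proposition \ref{trppro}'s uniqueness-up-to-$\sim$ for TRP pairs of a fixed signature does the work. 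Everything else is the now-familiar minimal-counterexample bracket used twice already in the paper.
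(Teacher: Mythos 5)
Your overall strategy (minimal counterexample in the signature order, then splitting on whether $p_0$ is top reducible or a TRP/TRB pair) matches the paper's, and your handling of the top-reducible case and of the reduction from TRP-but-not-TRB to TRB is fine. But there is a genuine gap at the heart of the TRB case. Your argument begins ``since $s$ is considered, the multiplied pair with signature $s$ that was selected has already been processed,'' and then branches on which criterion fired. This presupposes that some multiplied pair signed $s$ was ever placed into $TODO$. ``Considered'' only means $s\prec_s sig(mp)$ for the last selected $[m,p]$; it does not by itself guarantee that any element of $TODO$ ever carried the signature $s$. If no such multiplied pair exists, your cases (a), (b), (c) are all vacuous and the TRB pair $p_0$ never enters $DONE$, so the proposition fails for it. Establishing that $s=sig(p_0)$ actually appears in $TODO$ is precisely the content the paper supplies via the lemma immediately preceding this proposition: every non-initial TRB signature is the signature of a J-pair of two TRB pairs. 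Combined with the minimality hypothesis (those two TRB pairs have smaller signatures, hence are already in $DONE$), their J-pair has been inserted into $TODO$, and only then does your case analysis get off the ground. Your proposal neither invokes nor reproves this lemma, and without it the existence claim for TRB signatures is unsupported.

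A smaller remark: once the insertion into $TODO$ is secured, your extra case split on the GSyzygy and Signature Criteria is harmless but largely redundant --- the GSyzygy case cannot occur for a non-syzygy signature, and the Signature Criterion case just redirects to another pair signed $s$ whose reduction output (or the stored pair it is equivalent to, via CheckStore-GVW) serves as the witness, exactly as in your case (a). The substantive repair needed is only the $TODO$-membership argument above.
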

\begin{proof}
\item[1.] If there have pairs to contradict the conclusion, suppose $p_0$ has the smallest signature among them.
\item[2.]
If $p_0$ is top reducible by $[m_1,p_1]$, we have $sig(m_1p_1)\prec_s sig(p_0)$. There is a pair $[m_2,p_2]\in M\times DONE$ to satisfy $lm(m_2p_2)$ $=lm(m_1p_1)= sig(p_0)$ and $sig(m_2p_2)\preceq_s sig(m_1p_1)\prec_s sig(p_0)$.
\item[3.]
Suppose $p_0$ is a TRB pair. We need to prove $p_0\in DONE$. If $[m'_0,p'_0]\in TODO$ signed $sig(p_0)$, when $sig(p_0)$ considering, $p_0$ will be reduced from $m'_0p'_0$ by the TopReductions and stored to $DONE$. So prove $sig(p_0)\in TODO$ is necessary.

If $sig(p_0)$ is initial, $[1,(E_i,f_i)]$ had already been stored. If it is not initial, by the above lemma, there is a J-pair of two TRB pairs, $[m_1,p_1]$, to satisfy $sig(p_0)=sig(m_1p_1)$. By the assumption, both of these two TRB pairs are in $DONE$. And $[m_1,p_1]$ should be in $TODO$.
\item[4.]
If $p_0$ is a TRP and non-TRB pair, we have $p_0\equiv [m_1,p_1]\in M\times TRB$.
\end{proof}

With above propositions, we have the following comments:
\begin{itemize}
\item
The output of Reductions is always top irreducible.
\item
All the TRB pairs up to equivalence have been computed out.
\item
TRB-GVW is a regular TRB algorithm.
\item
Only TRB pairs can be stored.
\end{itemize}

\vspace{.3cm}
The main difference between TRB-GVW and GVW algorithms is: GVW use the regular top reductions (simply by regular reductions) to reduce pairs.  Pair $p_1$ is \textbf{regular reducible} by $p_2$, if and only if
\begin{itemize}
\item
both of them are non-syzygy;
\item $lm(p_2)|lm(p_1)$;
\item $p_1\preceq_p p_2$;
\item $lt(u_1)lt(p_2)\neq lt(u_2)lt(p_1)$, where $p_i=(u_i,v_i)$.
\end{itemize}
Pair $p_1$ top reducible by $p_2$ will deduce it also regular reducible by $p_2$. And the reverse is not always true.
But TRP pairs can not be regular reduced further, because they have already been the smallest pairs.
\begin{prop}
If replace top reductions in the TopReductions by regular reductions, the result will not be changed (up to equivalence).
\end{prop}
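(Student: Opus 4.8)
The plan is to show that replacing top reductions by regular reductions in \textbf{TopReductions} leaves the final stored set \emph{DONE} unchanged up to equivalence, by proving that the two kinds of reduction produce the same output (up to equivalence) whenever they are applied to a pair whose signature is being considered. The key observation is that a regular reduction of $p_1$ by $p_2$ differs from a top reduction by exactly the extra clause $lt(u_1)lt(p_2)\neq lt(u_2)lt(p_1)$; whenever this clause fails, the naive cancellation $p_1-\frac{lt(p_1)}{lt(p_2)}p_2$ would drop the leading term of $u_1$ as well, i.e. it would strictly decrease the signature. But a pair stored in \emph{DONE} can only be used as $p_2$, and $p_2$ always satisfies $p_1\preceq_p p_2$, so a top reduction never increases the signature; I would first check that in fact, for pairs with a \emph{considered} signature, a top reduction that is \emph{not} a regular reduction can never occur during \textbf{TopReductions}, because it would force $sig$ of the reductee to drop below a signature that, by Proposition~\ref{f5singlev}'s GVW analogue, is already occupied by a unique pair of \emph{DONE} — hence the reductee would have been reducible earlier and could not be a genuine intermediate result.

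Concretely, I would argue as follows. Let $p$ be a pair with considered signature fed to \textbf{TopReductions}. By the proposition just before this statement, whenever $p$ is top reducible it is top reducible by exactly one pair $[m_1,p_1]\in M\times DONE$, with $lm(m_1p_1)=lm(p)$, $p_1\succeq_p p$, and $p_1\sim p$ iff $p\in TRP$. I claim this single reduction step is automatically a regular reduction. Indeed, suppose $lt(u)lt(m_1p_1)=lt(u_1')lt(p)$ where $p=(u,v)$ and $m_1p_1=(u_1',v_1')$; this says $lm(u)\,sig(m_1p_1)=sig(p)\,lm(u)$ after cancelling leading coefficients appropriately, i.e. $sig(m_1p_1)=sig(p)$, so $p\equiv m_1p_1$ and $p$ would be non-initially equivalent to its reductant — but then $p$ itself is a TRP pair already equivalent to a \emph{DONE} element, contradicting that it was handed to \textbf{TopReductions} as something still to be reduced (or making the reduction vacuous up to equivalence). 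In every remaining case the cancellation $p\mapsto p-\frac{lt(p)}{lt(m_1p_1)}m_1p_1$ strictly lowers $lm$ while preserving $sig$, which is exactly a regular reduction step.

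Having shown each individual reduction step of \textbf{TopReductions} is simultaneously a top reduction and a regular reduction (up to equivalence), I would conclude by induction on the number of steps: the two procedures traverse the same chain $p_0\to p_1\to\cdots\to p_k$ up to equivalence, and both halt exactly when the current pair is no longer top reducible, which for considered signatures coincides with no longer being regular reducible (a TRP pair is the $\prec_p$-smallest pair with its signature, hence neither top nor regular reducible). Therefore the outputs agree up to equivalence, and since \textbf{CheckStore-GVW} only inspects the pair up to equivalence, the set \emph{DONE} — and with it the whole run of the algorithm, by the usual induction on considered signatures used throughout Section~5 — is unchanged.

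The main obstacle I anticipate is the bookkeeping around the extra regular-reduction clause at a step where $lt(u_1)lt(p_2)= lt(u_2)lt(p_1)$ but the pairs are \emph{not} equivalent, i.e. ruling out that such a ``signature-dropping'' top reduction is ever actually performed by \textbf{TopReductions} on an intermediate result $p_i$. Handling this cleanly requires the GVW analogue of Lemma~\ref{f5singlev} — that among multiplied pairs from \emph{DONE} with a fixed leading monomial, a considered signature, and passing the criterions, at most one survives — together with the observation that the dropped signature $sig(p_i)$ would then already be realized by such a unique \emph{DONE} pair, so $p_i$ was top reducible by it at an earlier (smaller-signature) stage and cannot be a legitimate intermediate term of an in-progress reduction. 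Once that subtlety is dispatched, the rest is the routine induction-on-signature argument that the section has already used several times.
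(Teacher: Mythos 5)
There is a genuine gap, and it stems from a misreading of the two reduction notions. In this paper a top reduction requires the \emph{strict} inequality $p_1\prec_p p_2$, so the subtracted term's $u$-part has leading monomial strictly $\prec_s sig(p_1)$ and a top reduction can never cancel the signature; moreover the paper states explicitly that top reducible implies regular reducible. Hence the case you spend most of your effort on --- ``a top reduction that is not a regular reduction'' occurring inside \textbf{TopReductions} --- is vacuous. The direction that actually needs an argument is the opposite one: the regular-reduction version may perform \emph{extra} steps, namely reductions by \emph{similar} pairs with non-proportional leading coefficients (where $\preceq_p$ holds with equality but the fourth clause still holds), and these are not top reductions. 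Your step-by-step simulation claim that ``the two procedures traverse the same chain $p_0\to\cdots\to p_k$ up to equivalence'' is therefore false in general, and it is also unsupported: \textbf{TopReductions} does not prescribe which reducer to use at each step, and the uniqueness of the reducer that you lean on is only proved in the paper under the F5 criterions (Lemma~\ref{f5singlev}); the ``GVW analogue'' you invoke is nowhere established and is not needed.

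The paper's proof avoids all of this by comparing only the two endpoints. Both outputs are irreducible for their respective notions, and since regular irreducibility implies top irreducibility, both are TRP pairs with the same signature $sig(mp)$; Proposition~\ref{trppro} then forces them to be similar, hence equivalent. A further leading-coefficient comparison (if $lc(u_1)lc(p_2)\neq lc(u_2)lc(p_1)$, one output could still be regular reduced by the other, producing a pair $\prec_p$-smaller than a TRP pair of the same signature, contradicting the minimality in Proposition~\ref{trppro}) pins the outputs down up to a scalar. Your closing parenthetical --- that a TRP pair is the $\prec_p$-smallest pair with its signature and hence neither top nor regular reducible --- is exactly this key ingredient, but as written it sits inside an argument whose main scaffolding (the chain-matching and the uniqueness lemma) does not hold up. Restructure the proof around the endpoint comparison and drop the claim that the intermediate chains coincide.
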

\begin{proof}
\item[1.]
From $mp$, suppose we get two different pairs $p_1$ and $p_2$ by the unchanged and changed TopReductions respectively. We have $p_1$ and $p_2$ are both top irreducible. By Proposition (\ref{trppro}), $p_1\sim p_2$.
\item[2.]
In addition, we also have $lc(u_1)lc(p_2)=lc(u_2)lc(p_1)$. Otherwise $p_2$ will be regular reduced further. And we will get a pair smaller than $p_2$ signed the same signature.
\end{proof}

\section{Equivalent conditions for the finite termination of a TRB algorithm.}

The signature order of F5's can be modified for the efficiency: for lower syzygy signatures, for more efficient array eliminations (see the F4 algorithm, \cite{faugere1}), or for some other things. Replaced by the GSyzygy Criterion, every admissible signature order can generate a regular TRB algorithm. A question is: Are they all valuable?

The answer is false, because sometimes the modified algorithms will terminate infinitely. e.g., modifying the signature order by
$$x^{\alpha}E_i\prec_s x^{\beta}E_j  \Longleftrightarrow
\left\{
\begin{array}{l}
lm(x^{\alpha}f_i)\prec_m lm(x^{\beta}f_j) \ \ or
\\
lm(x^{\alpha}f_i)= lm(x^{\beta}f_j), i<j
\end{array}
\right.
$$
will generate an algorithm to terminate finitely; But modifying by
$$x^{\alpha}E_i\prec_s x^{\beta}E_j  \Leftrightarrow
\left\{
\begin{array}{l}
deg(x^{\alpha}f_i)<deg(x^{\beta}f_j) \ \ or
\\
deg(x^{\alpha}f_i)=deg(x^{\beta}f_j), x^\alpha f_i\succ_mx^\beta f_j \ or
\\
x^\alpha f_i=x^\beta f_j, i<j
\end{array}
\right.
$$
will lead to an algorithm of infinite termination. In the rest of this section, we will prove this argument.

\begin{lem}\label{monomials}
$S=(m_1, m_2,\cdots)$ is an infinite monomials sequence, where $0\neq m_i\in M$. $S$ always
has an infinite subsequence $(m_{k_1}, m_{k_2},\cdots)$ to satisfy that $k_i<k_j$ and $m_{k_i}|m_{k_j}$, for all $i<j$.
\end{lem}
\begin{proof}
\item[1.]
Use mathematical induction on the number of variables, $n$. When
$n=0$, all monomials can only be $1$. The conclusion is directly true.
Suppose $n$ is the smallest number to make the conclusion be not always true.
Denote $m_i=x_1^{\alpha_{1,i}}x_2^{\alpha_{2,i}}\cdots x_n^{\alpha_{n,i}}, \forall i$.
\item[2.]
Suppose there is a monomial $m_k$ to satisfy $m_k\nmid m_i, \forall i>k$.
Define subsequences of $S$, $S_{j,\alpha}$, where $0<j\leq n$ and $0\leq\alpha<\alpha_{j,k}$,
$$
S_{j,\alpha}:=(m_i~|~i>k, \alpha_{j,i}=\alpha).
$$
We assert that at least one of these subsequences has infinite number of elements.
For every $i>k$, since $m_k\nmid m_i$, $m_i$ will belong to at least one of above subsequences. Then we have
$$
\cup S_{j,\alpha}=\{m_i|i>k\}
$$
has infinite elements. But there are only finite number of subsequences, so at least one of them is infinite.

Say $S_{j,\alpha}$ is infinite. All the monomials in $S_{j,\alpha}$ have the same component $x_{j}^{\alpha}$. Ignored it, $S_{j,\alpha}$ will be equivalent to an infinite sequence corresponding to $n-1$ variables. By the induction hypothesis, it satisfies our assumptions.
\item[3.]
For all monomials $m_k$, if there always has a monomial $m_i$ to satisfy $m_k|m_i$ and $i>k$. The subsequence can be built one by one.
\end{proof}

Let $\prec_m$ and $\prec_s$ are admissible orders over $\mP$ and $\mP^d$ respectively. Call them \textbf{compatible} if for all $s_1, s_2\in ME$, $m_1, m_2\in M$, $s_1,s_2,m_1, m_2\neq 0$, it always deduces to $s_1m_1\preceq_s s_2m_2$ from $s_1\preceq_s s_2$ and $m_1\preceq_m m_2$. The equality holds if and only if $s_1=s_2$ and $m_1=m_2$.

Restricted $\prec_s$ onto each $\mP$ branch of $\mP^d$, we will get $d$ distinct sub-orders $\prec_{s,i}$ over $\mP$. $\prec_s$ is compatible to $\prec_m$ if and only if all the $\prec_{s,i}$ are same to $\prec_m$. Call $\prec_m$ and $\prec_s$ \textbf{almost compatible} if they are either compatible, or there has only one sub-order $\prec_{s,k}$ not same to $\prec_m$, and satisfies $x^{\alpha}E_k\prec_s E_i$, for all $\alpha$ and $i\neq k$.

\begin{lem}\label{alcompate}
If $\prec_s$ and $\prec_m$ are not almost compatible, there are $\{i,j,x^\alpha,x^\beta,x^{\gamma}\}\in {\mathbb N}\times {\mathbb N}\times M\times M\times M$ to satisfy
\begin{itemize}
\item
$\gcd(x^\beta, x^{\gamma})=1, x^\beta \prec_{s,i} x^\gamma, x^\gamma\prec_m x^\beta$;
\item
$\gcd(x^\alpha, x^\gamma)=1, x^\alpha E_i \prec_s E_j$.
\end{itemize}
\end{lem}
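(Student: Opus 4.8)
The plan is to first normalise any disagreement between a sub-order $\prec_{s,i}$ and $\prec_m$ so that it is witnessed by \emph{coprime} monomials, and then to locate an index $i$ at which $\prec_{s,i}\neq\prec_m$ together with a generator $E_j$ lying $\prec_s$-above $E_i$; the failure of almost-compatibility is exactly what guarantees such a configuration.

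\textbf{Step 1 (coprime witnesses).} I would first show: $\prec_{s,i}\neq\prec_m$ if and only if there are monomials $x^{\beta},x^{\gamma}$ with $\gcd(x^{\beta},x^{\gamma})=1$, $x^{\beta}\prec_{s,i}x^{\gamma}$ and $x^{\gamma}\prec_{m}x^{\beta}$. The ``if'' direction is immediate (otherwise $x^{\beta}\prec_m x^{\gamma}\prec_m x^{\beta}$). For ``only if'', since the two total orders differ there are $m_1\neq m_2$ with, after renaming, $m_1\prec_{s,i}m_2$ and $m_2\prec_m m_1$; writing $m_1=g\,x^{\beta}$, $m_2=g\,x^{\gamma}$ with $g=\gcd(m_1,m_2)$ gives $\gcd(x^{\beta},x^{\gamma})=1$, and cancelling $g$ — legitimate because $\prec_{s,i}$ (a restriction of the admissible order $\prec_s$, hence again admissible on $\mathbb P$) and $\prec_m$ are both stable under multiplication — yields the claim. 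This already supplies the first bulleted condition for every $i$ with $\prec_{s,i}\neq\prec_m$.

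\textbf{Step 2 (a sufficient configuration, and the multi-twist case).} Next observe that it is enough to find an index $i$ with $\prec_{s,i}\neq\prec_m$ which is \emph{not} the $\prec_s$-largest generator, i.e.\ with $E_i\prec_s E_j$ for some $j$: take $x^{\beta},x^{\gamma}$ from Step 1 and set $x^{\alpha}=1$; then $\gcd(x^{\alpha},x^{\gamma})=1$ and $x^{\alpha}E_i=E_i\prec_s E_j$, so the second condition holds. In particular, if at least two distinct sub-orders $\prec_{s,i_1},\prec_{s,i_2}$ differ from $\prec_m$, then at most one of $E_{i_1},E_{i_2}$ can be the (unique) $\prec_s$-largest generator, so at least one of $i_1,i_2$ works.

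\textbf{Step 3 (the remaining case and the main obstacle).} Since $\prec_s,\prec_m$ are not almost compatible, not every $\prec_{s,i}$ equals $\prec_m$; by Step 2 we may assume exactly one sub-order $\prec_{s,k}$ differs from $\prec_m$, and then the negation of ``almost compatible'' says precisely that $E_k$ fails to sit entirely below the remaining generators as the definition demands — equivalently, some $\alpha_0$ and $i_0\neq k$ witness this failure. The plan is to deduce from this that $E_k$ is nonetheless $\prec_s$-dominated by some generator $E_j$ (that is, $E_k\prec_s E_j$), whereupon Step 2 with $i=k$, $x^{\alpha}=1$ finishes the proof and the coprimality conditions become vacuous. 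I expect \emph{this} to be where the real work lies: one must either rule out — or, if it is not excluded by the standing conventions on the $E_i$, treat directly by producing a genuinely nontrivial $x^{\alpha}$ coprime to the twist monomial $x^{\gamma}$ of Step 1 — the situation in which the unique twisted generator $E_k$ is itself $\prec_s$-maximal, since then no ``reset'' generator $E_j$ with $x^{\alpha}E_k\prec_s E_j$ is available and one has to exploit the finer structure of $\prec_s$ (and the limited freedom in the choice of $x^{\beta},x^{\gamma},x^{\alpha}$ coming from Step 1) more carefully.
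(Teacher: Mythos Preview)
Your Steps 1 and 2 are correct and essentially coincide with the paper's opening moves: the gcd cancellation to obtain coprime witnesses, and disposing of the case in which the twisted index is not extreme among the generators by taking $x^{\alpha}=1$. The gap is entirely in Step 3. You isolate the remaining situation --- exactly one twisted sub-order $\prec_{s,k}$, with $E_k$ extreme --- but do not resolve it; and your primary plan there (``deduce that $E_k$ is nonetheless $\prec_s$-dominated by some $E_j$'') cannot succeed, since nothing in the hypotheses prevents $E_k$ from being genuinely $\prec_s$-maximal among the $E_\ell$. In fact, reading the second bullet literally as $x^{\alpha}E_i\prec_s E_j$, the statement is \emph{false} in that configuration: with $i$ forced to equal $k$ by the first bullet, one has $x^{\alpha}E_k\succeq_s E_k\succ_s E_j$ for every $j\neq k$ and every $x^{\alpha}$. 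The paper's own proof (its Steps 2--4) consistently establishes the reversed inequality $x^{\alpha}E_i\succ_s E_j$, and it is that version which is invoked in the termination theorem; the displayed $\prec_s$ is a typo.

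With the inequality read as $\succ_s$, the hard case becomes: only $\prec_{s,k}$ is twisted and $E_k$ is $\prec_s$-\emph{minimal}; the failure of almost compatibility now hands you a genuine witness $x^{\alpha_0}E_k\succ_s E_j$ for some $j\neq k$, but with no control over $\gcd(x^{\alpha_0},x^{\gamma})$. The missing construction, which is the whole content of the paper's Step 4, repairs coprimality without losing the inequality: factor $x^{\alpha_0}=X_1X_2$ with $X_1$ supported on variables not dividing $x^{\gamma}$ and $X_2$ on variables dividing $x^{\gamma}$; choose $c\ge1$ with $X_2\mid x^{c\gamma}$; and set $x^{\alpha'}:=x^{c\beta}X_1$. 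Then $\gcd(x^{\alpha'},x^{\gamma})=1$, and if $x^{\alpha'}E_k\prec_s E_j$ held one would obtain the contradiction
\[
X_2E_j \;\succ_s\; X_2x^{\alpha'}E_k \;=\; x^{\alpha_0+c\beta}E_k \;\succ_s\; x^{c\beta}E_j \;\succeq_s\; x^{c\gamma}E_j \;\succeq_s\; X_2E_j,
\]
the successive comparisons using respectively the assumed $x^{\alpha'}E_k\prec_s E_j$, the identity $X_1X_2=x^{\alpha_0}$, the witness $x^{\alpha_0}E_k\succ_s E_j$, the fact that $\prec_{s,j}=\prec_m$ together with $x^{\beta}\succ_m x^{\gamma}$, and finally $X_2\mid x^{c\gamma}$. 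This is precisely the ``finer structure'' you allude to but do not supply.
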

\begin{proof}
\item[1.] Suppose $\prec_{s,1}$ is not same to $\prec_m$. There are $x^\beta$ and $x^\gamma$ to satisfy $\gcd(x^\beta, x^{\gamma})=1, x^\beta \prec_{s,1} x^\gamma, x^\gamma\prec_m x^\beta$. We need to find out $\alpha$ and $k$ to satisfy $\gcd(x^\alpha, x^\gamma)=1, x^\alpha E_1 \prec_s E_k$.
\item[2.]
    If $E_1\succ_s E_k$ for some $k\neq 1$, $\{1,k,1,x^\beta,x^\gamma\}$ will meet the conclusion. We consider the case $E_1\prec_s E_k$ for all $k\neq 1$.
\item[3.]
All the $\prec_{s,k}$, $k\neq 1$ are same to $\prec_m$, or $\{k,1,1,x^{\beta'},x^{\gamma'}\}$ meets the conclusion.
\item[4.]
By the definition of almost compatible, $x^\alpha E_1\succ_s E_2$, for some $\alpha$.
Let $x^{\alpha}=x_1^{\alpha_1}\cdots x_n^{\alpha_n}$. Denote $X_1$ and $X_2$ by $\displaystyle \prod_{x_i\nmid x^{\gamma}}x_i^{\alpha_i}$ and $\displaystyle \prod_{x_i| x^{\gamma}}x_i^{\alpha_i}$ respectively. $X_1X_2=x^{\alpha}$.
Let $c$ be a positive integer to satisfy $X_2|x^{c\gamma}$. Define $x^{\alpha'}=x^{c\beta}X_1$. We have $\gcd(x^{\alpha'},x^{\gamma})=1$ and $x^{\alpha'}E_1\succ_s E_2$. Otherwise $x^{\alpha'}E_1\prec_s E_2$ will deduce
$$
X_2E_2\succ_s X_2x^{\alpha'}E_1=x^{\alpha+c\beta}E_1\succ_sx^{c\beta}E_2\succeq_sx^{c\gamma}E_2\succeq_sX_2E_2.
$$
At last $\{1,2,x^{\alpha'},x^\beta,x^\gamma\}$ meets the conclusion.
\end{proof}

\begin{thm}
If an algorithm always has finite termination for all input polynomials, call it \textbf{terminated algorithm}. The following conditions are equivalent to each other.
\begin{enumerate}
\item
 The regular TRB algorithm is a terminated algorithm.
\item
For every initial polynomials f, there have only finite number of TRB equivalent sets.
\item
Orders $\prec_m$ and $\prec_s$ are almost compatible.
\item
For every f, their have only finite number of TRP similar sets.
\end{enumerate}
\end{thm}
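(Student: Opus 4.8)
The plan is to establish the cyclic chain $(1)\Rightarrow(2)\Rightarrow(3)\Rightarrow(4)\Rightarrow(1)$. The step $(1)\Rightarrow(2)$ is immediate from the definition of regularity: if a regular TRB algorithm terminates on $f$, its output $DONE$ is finite and already contains a representative of every TRB pair up to equivalence, so there are only finitely many TRB equivalent sets for that $f$.

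For $(2)\Rightarrow(3)$ I would prove the contrapositive by an explicit construction. Assuming $\prec_m$ and $\prec_s$ are not almost compatible, Lemma~\ref{alcompate} gives indices $i,j$ and monomials $x^\alpha,x^\beta,x^\gamma$ with $\gcd(x^\beta,x^\gamma)=\gcd(x^\alpha,x^\gamma)=1$, $x^\beta\prec_{s,i}x^\gamma$, $x^\gamma\prec_m x^\beta$ and $x^\alpha E_i\prec_s E_j$. I would then choose $f_i,f_j$ (setting $d=\max(i,j)$, with the remaining $f_\ell$ taken so as not to interfere) so that $x^\alpha f_i$ and $f_j$ share their leading monomial. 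Because $x^\alpha E_i\prec_s E_j$, at that leading monomial the component $E_j$ wins all reductions, so the pair of signature $x^{k\gamma}x^\alpha E_i$ does get top reduced; and because $x^\gamma\prec_m x^\beta$ while $x^\beta\prec_{s,i}x^\gamma$, the reduced pair $q_k$ acquires, for each $k=1,2,\dots$, a new leading monomial (a multiple of $x^{k\beta}$) that no pair of smaller signature can touch, so $q_k$ is top irreducible, hence TRP. Writing $lm(q_k)=x^{\mu_k}$ and $sig(q_k)=x^{\nu_k}E_i$, the vectors $\mu_k-\nu_k$ equal $k(\beta-\gamma)$ plus a fixed vector; since $\beta\neq\gamma$ these are pairwise distinct, so the $q_k$ lie in pairwise distinct TRP similar sets, and as every nonempty TRP similar set contains a TRB pair there are infinitely many TRB equivalent sets, contradicting $(2)$.

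For $(3)\Rightarrow(4)$, fix $f$ and a component index $i$; by Proposition~\ref{trppro} each non-syzygy signature $s=x^\alpha E_i$ has a minimal TRP pair $p_s$, and two TRP pairs share a TRP similar set exactly when they share the index and the exponent difference of leading monomial over signature. So it suffices to bound, for each $i$, the number of vectors $\delta(s):=\mu(s)-\alpha$ that occur, where $lm(p_s)=x^{\mu(s)}$. Almost compatibility is what rigidifies the dependence of $lm(p_s)$ on $s$: in the homogeneous F5/EF5 setting it fixes the total degree of $lm(p_s)$, and in general it controls its $\prec_m$-size well enough that $\delta$ varies monotonically as the signature is divided and admits no infinite strictly decreasing chain. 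If infinitely many distinct $\delta(x^\alpha E_i)$ occurred, I would pick $\alpha^{(1)},\alpha^{(2)},\dots$ realizing distinct values, extract by Lemma~\ref{monomials} a subsequence with $x^{\alpha^{(k_1)}}\mid x^{\alpha^{(k_2)}}\mid\cdots$, and conclude that along it the $\delta$'s are eventually constant, which is impossible as they are pairwise distinct. Hence each index contributes finitely many values, and there are finitely many TRP similar sets.

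For $(4)\Rightarrow(1)$, observe first that the signatures occurring in a single TRP similar set form a set of monomials in the $i$-th component, whose divisibility-minimal members are finite by Lemma~\ref{monomials}; the TRB pairs of that set are exactly those of minimal signature, and a TRB pair is determined up to equivalence by its signature (Proposition~\ref{trppro}), so $(4)$ yields only finitely many TRB equivalent sets, say $N$. Now run the regular TRB algorithm: every pair it places into $DONE$ is an output of Reductions, hence top irreducible, hence (when non-syzygy) a TRP pair, and by the behaviour of the Criterions and CheckStore of a regular TRB algorithm a pair whose signature is divisible by the signature of an element already stored in its similar set is discarded; so $DONE$ holds at most one representative per TRB equivalent set and $|DONE|\le N$ throughout. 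Each enlargement of $DONE$ appends at most $|DONE|\le N$ new J-pairs to $TODO$, so at most $d+N^2$ multiplied pairs are ever created, and since every pass through Step~1 consumes one of them the algorithm halts. The main obstacle is $(2)\Rightarrow(3)$: turning the abstract incompatibility data of Lemma~\ref{alcompate} into concrete polynomials and checking that the pairs it produces are genuinely top irreducible; the monotonicity-and-size claim feeding the Dickson-style argument in $(3)\Rightarrow(4)$ is a secondary delicate point.
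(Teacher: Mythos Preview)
Your cyclic scheme $(1)\Rightarrow(2)\Rightarrow(3)\Rightarrow(4)\Rightarrow(1)$ matches the paper's, and your $(1)\Rightarrow(2)$ and $(2)\Rightarrow(3)$ are essentially the paper's arguments (the paper makes the construction explicit: $f_i=x^{\gamma}$, $f_j=x^{\alpha+\beta}-x^{\alpha+\gamma}$, all other $f_\ell=0$, yielding the infinite family $(x^{\alpha+t\beta}E_i+\cdots,\;x^{\alpha+(t+1)\gamma})$ of pairwise inequivalent TRB pairs). The remaining two implications, however, contain genuine gaps.

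For $(3)\Rightarrow(4)$ you extract only a signature-divisibility chain and then assert that the exponent differences $\delta$ become ``eventually constant'' by some monotonicity. This step is not justified: from $sig(p_1)\mid sig(p_2)$ and almost compatibility one gets only an inequality $x^{\mu_2}\preceq_m x^{\mu_1+\alpha_2-\alpha_1}$, and since $\delta_i=\mu_i-\alpha_i$ need not be nonnegative you cannot cancel to compare the $\delta$'s, nor does any well-order on such differences fall out. The paper's key move is to apply Lemma~\ref{monomials} \emph{twice}, extracting a subsequence along which both $sig(p_i)\mid sig(p_j)$ and $lm(p_i)\mid lm(p_j)$ hold. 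With this double divisibility one checks directly that either $p_1$ top reduces $p_2$ (if $p_1\succ_p p_2$) or $p_2-\frac{lt(u_2)}{lt(u_1)}p_1$ top reduces $p_2$ (if $p_1\prec_p p_2$, using $\prec_{s,k}=\prec_m$), contradicting $p_2\in TRP$. Your sketch misses this second application of Dickson.

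For $(4)\Rightarrow(1)$ your bound $|DONE|\le N$ rests on the claim that ``a pair whose signature is divisible by the signature of an element already stored in its similar set is discarded.'' Nothing in the definition of a regular TRB algorithm guarantees this: regularity is a \emph{completeness} condition (all TRB pairs eventually appear in $DONE$), not a minimality condition, and the generic CheckStore only blocks $p'$ when $p'\equiv mp$. Indeed Example~\ref{f5criexam} exhibits TRB-F5 storing a non-TRB TRP pair, so $DONE$ can contain several pairs from the same similar set and your $d+N^2$ count fails. The paper instead indexes a vector $N_{TODO}\in\mathbb{N}^{k+1}$ by the finitely many TRP similar classes (plus one coordinate for the initial pairs), observes that every selected $[m,p]$ either is discarded or is replaced by J-pairs lying in strictly smaller classes (since the output $p'$ satisfies $p'\prec_p p$ and each new J-pair is $\preceq_p p'$), and concludes that $N_{TODO}$ strictly decreases in the lexicographic order on $\mathbb{N}^{k+1}$, which is well-founded. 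This argument needs no bound on $|DONE|$.
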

\begin{proof}
\item[$1.\Rightarrow 2.$]
When the algorithm finished, all the TRB pairs up to equivalence were computed out and stored to $DONE$. They have a finite number.
\item[$2.\Rightarrow 3.$]
If $\prec_s$ and $\prec_m$ are  not almost compatible, we can find $\{i,j,x^\alpha, x^\beta, x^\gamma\}$ to meet the conclusion of Lemma (\ref{alcompate}).
Initialize polynomials as $f_i=x^{\gamma}, f_j=x^{\alpha+\beta}-x^{\alpha+\gamma}, f_k=0, \forall k\neq i,j$. The TRB pairs includes of $(E_i,x^{\gamma})$, $(E_j, x^{\alpha+\beta}-x^{\alpha+\gamma})$ and $(x^{\alpha+t\beta}E_i+\cdots, x^{\alpha+(t+1)\gamma})$, for all $t\geq 1$. All these pairs are not equivalent to each other.
\item[$3.\Rightarrow 4.$]
Decompose the set $TRP$ into $\cup PS_i, i=1,\cdots,d$, where $PS_i$ stores all the TRP pairs corresponding to index $i$. Suppose set $PS_k$ has infinite number of TRP pairs, all these pairs are not similar to each other.

If $\prec_{s,k}$ is not same to $\prec_m$, by the definition, $x^\alpha E_k\prec_s E_i$, for all $\alpha$ and $i\neq k$.
$(E_k, f_k)$ will be the largest TRB pair, and all the TRP pairs with index $k$ will be similar to $(E_k, f_k)$.

Suppose $\prec_{s,k}$ is same to $\prec_m$. By Lemma (\ref{monomials}), we can find out a sequence $(p_{1},p_{2},\cdots)$ from $PS_k$ to satisfy that all the $p_i$ are not similar to each other, $sig(p_{i})|sig(p_{j})$ and $lm(p_{i})|lm(p_{j})$ for all $i<j$.
But this is impossible: Consider only the $p_1$ and $p_2$.

If $p_1 \succ_p p_2$, $p_2$ can be top reduced by $p_1$.

If $p_1\prec_p p_2$, since $\prec_{s,k}$ is same to $\prec_m$, $p_2$ can be top reduced by $\displaystyle p_2-\frac{lt(u_2)}{lt(u_1)}p_1$.
\item[$4.\Rightarrow 1.$]
We present all the TRP similar groups by $p_1, p_2, \cdots p_k$, where $p_i\succ_p p_{i+1}$.

According to the descriptions of TRB algorithm, pairs in $DONE$ are only TRP or Syzygy pairs. The new generated J-pair is similar to the smaller one of its contributed pairs. So, except for the initial ones, every multiplied pair in $TODO$ is similar to a TRP pair.
  Define $N_{TODO}=[n_0, n_1,\cdots, n_{k}]\in {\mathbb N}^{k+1}$, where $n_0$ records the number of initial multiplied pairs in $TODO$, $n_i$ records the number of similar-to-$p_i$ multiplied pairs in $TODO$, $i=1,\cdots,k$.

When $[m,p]\in TODO$ is selected out, by the TRB algorithm, it will be either discarded, or stored to $DONE$ as a new top irreducible pair $p'$, where $p'\prec_p p$. The new generated J-pairs from $p'$ will be $\preceq_p p'\prec_p p$. So, after each loop finished, $N_{TODO}$ will be proper smaller than before w.r.t. the Lexico order.

$N_{TODO}$ can not always be smaller. At last the algorithm will terminated when $N_{TODO}=[0,\cdots,0]$.
\end{proof}

With above result, we propose the conclusions of this section: The F5 algorithm and the extended F5 are both terminated algorithms. The GVW algorithm has finite termination for all input polynomials if and only if the admissible orders $\prec_m$ and $\prec_s$ are almost compatible. In particular, the G2V algorithm (see \cite{gao1}) can always terminate finitely.

\section {Mpair Criterion and the TRB-MJ Algorithm}
Although the GSyzygy Criterion improves Syzygy Criterion, the Signature Criterion is not as powerful as the Rewritten Criterion, because there may have some signatures $s$, all the multiplied pairs in $TODO$ signed $s$ meet the Rewritten Criterion. In this case, we need not to perform reductions on these signatures.

With the proof of Lemma (\ref{f5singlev}), the Rewritten Criterion and the ERewritten Criterion can be improved as a SRewritten Criterion. $[m,p]$ meets the SRewritten Criterion if
\begin{itemize}
\item
There is a pair $p_1\in DONE$, such that $sig(p_1)|sig(mp)$ and $sig(p)\prec_s sig(p_1)$;
\item
or there is a pair $p_2\in DONE$, such that $sig(m_2p_2)=sig(mp)$, $sig(p_2)\equiv_{F5} sig(p)$ and $[m_2,p_2]\not\in TODO$.
\end{itemize}
SRewritten Criterion can block more multiplied pairs. See the following example.
\begin{exam}\label{f5criexam}
Set $\prec_m$ be the Degree Reverse Lexico order, $\prec_s$ defined as (\ref{f5precs}).
Compute the Gr\"{o}bner basis of $$f=\{x_1x_4, x_1x_2-x_2^2, x_1x_3-x_3^2\}.$$
With the TRB-F5 or TRB-GVW algorithm, we will compute TRP pairs one by one as follows:
$(E_3, x_1x_3-x_3^2), (E_2, x_1x_2-x_2^2), (x_3E_2+\cdots, x_2^2x_3-x_3^2x_2), (E_1, x_1x_4),
(x_3E_1+\cdots, x_3^2x_4),$ $(x_2E_1+\cdots, x_2^2x_4), (x_2x_3E_1+\cdots, x_2x_3^2x_4)
$.

The last one is not a TRB pair, it is similar to $(x_3E_1+\cdots,x_3^2x_4)$. It passes ERewritten Criterion but meets the SRewritten Criterion.
\end{exam}

But SRewritten Criterion is just a limited improvement. In the above example, if replace the second initial polynomial by $x_1x_2x_5-x_2^2x_5$, it also can not block the last TRP pair. In this section, a new criterion is proposed to block such unnecessary pairs. We call it Mpair Criterion. A conclusion is: All the non-TRB (and non-syzygy) signatures will meet the Mpair Criterion, so we need only to compute the TRB signatures (and some Syzygy signatures).


Suppose $\prec_m$ and $\prec_s$ over $\mP$ and $\mP^d$ respectively are compatible.
The \textbf{TRB-MJ algorithm} is defined by \newline \{\textbf{MJCriterions, TopReductions, CheckStore}\}.

The \textbf{MJCriterions} consists of the GSyzygy Criterion and the Mpair Criterion.
Call $[m,p]$ meets the \textbf{Mpair Criterion} if $[m,p]$ is neither initial nor an M-pair of $DONE$, where the M-pair is defined below.

Let $PS$ be a pairs set. $[m_0,p_0]\in M\times PS$ is a \textbf{minimal multiplied pair} (\textbf{M-pair}) of $PS$ signed signature $s$, if
\begin{itemize}
\item
 $s=sig(m_0p_0)$,
\item
$m_0\neq 1$,
\item
for all $[m,p]\in M\times PS$ signed $s$, $m\neq 1$, $p\not\equiv p_0$, we always have $p_0\prec_p p$, or $p_0\sim p$ but $sig(p_0)\succ_s sig(p)$.
\end{itemize}

The CheckStore in this algorithm will always return true, because when a J-pair passed the Criterions, it must top reducible by a TRB pair in $DONE$ (This property will be proved later). Then the output of TopReductions will pass the CheckStore. So, all the J-pairs passed the MJCriterions will generate new pairs in $DONE$. Say $[m,p]$ is an \textbf{MJ-pair}, if it is both of J-pair and M-pair of $DONE$. The signature $sig(mp)$ is called \textbf{MJ-signature}. Study MJ-signatures is necessary.

\begin{prop}\label{mjtrb}
All TRB pairs was calculated by the TRB-MJ algorithm.
\end{prop}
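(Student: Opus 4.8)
The plan is to mimic the structure of the correctness argument for the TRB-F5 and TRB-GVW algorithms, i.e.\ to run an induction on the signature order and show that, for every considered signature, the MJ-pairs reproduce enough of $\mathit{DONE}$ to force every TRB pair into $\mathit{DONE}$. Concretely, I would first establish the analogue of Lemma~\ref{f5singlev} and Proposition~\ref{f5trb}/its GVW counterpart in the Mpair setting: for a considered signature $s$, (a) the output of \textbf{TopReductions} is top irreducible; (b) every non-syzygy pair $p$ with $sig(p)=s$ is $\equiv$ to some $[m_1,p_1]\in M\times \mathit{DONE}$ with $p_1\succeq_p p$, and $p_1\sim p$ iff $p\in TRP$; (c) in particular each TRB pair signed $s$ lies in $\mathit{DONE}$. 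As in the earlier sections, (b) is proved by taking a minimal-signature counterexample $p_0$: if $p_0$ is top reducible, the reducer has strictly smaller signature, which is already handled by the induction hypothesis together with Proposition~\ref{trtrb} (top reducibility by $TRB$); if $p_0\in TRP$ we must show the relevant multiplied pair reached $\mathit{DONE}$.

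The key new ingredient is the lemma, parallel to the ``J-pair of two TRB pairs'' lemma in the GVW section, that every non-initial TRB signature $s$ is an \textbf{MJ-signature} — i.e.\ there is a pair $[m,p]$ that is simultaneously the J-pair of two elements of $\mathit{DONE}$ and the M-pair of $\mathit{DONE}$ signed $s$. Here compatibility of $\prec_m$ and $\prec_s$ is used exactly as in that earlier lemma: starting from the $\prec_p$-smallest TRB pair $p_1$ whose signature properly divides $s$, lift it to signature $s$, reduce, and track the J-pair of $p_1$ with its top-reducer; compatibility guarantees the bookkeeping on multipliers closes up so that the resulting J-pair is genuinely signed $s$. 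One then checks this J-pair is the M-pair of $\mathit{DONE}$: among all $[m,p]\in M\times\mathit{DONE}$ with $m\neq 1$ signed $s$, it has the smallest $\prec_p$, because any smaller competitor would (being top reducible, as $s$ is a considered TRB signature forcing non-minimality unless it is the TRB pair itself) contradict minimality of $p_1$ or the definition of $TRB$. Since this J-pair passes the GSyzygy Criterion (it is non-syzygy) and passes the Mpair Criterion (it \emph{is} an M-pair of $\mathit{DONE}$), it is selected and, after \textbf{TopReductions}, the top-irreducible TRP pair signed $s$ — hence the unique TRB pair signed $s$ — is stored to $\mathit{DONE}$. Initial signatures $E_i$ are in $\mathit{DONE}$ by Step~0 and the first loops.

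Finally I would assemble these into the statement: by induction on $\prec_s$, every considered signature's TRB pair is in $\mathit{DONE}$, and since the algorithm only terminates when $\mathit{TODO}=\phi$ — at which point every TRB signature has been considered (its generating MJ-pair having been enqueued via Step~3) — all TRB pairs up to equivalence lie in $\mathit{DONE}$ at termination. I would also remark, as the text flags, that \textbf{CheckStore} always returns true here, so nothing computed is discarded.

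The main obstacle I expect is the MJ-signature lemma, specifically verifying that the J-pair produced by the lifting/reduction process is \emph{both} correctly signed $s$ \emph{and} minimal in the M-pair sense within the current $\mathit{DONE}$ — the two roles interact, since minimality must be argued against the state of $\mathit{DONE}$ at the moment $s$ is being considered, and one must rule out that an ``off-signature'' competitor $[m,p]$ with $sig(p)\succ_s sig(p_0)$ sneaks in below it. Handling the similar-but-not-equivalent case in the M-pair definition (the tie-break ``$p_0\sim p$ but $sig(p_0)\succ_s sig(p)$'') is where I anticipate the delicate case analysis, and it is also where the compatibility hypothesis does the real work.
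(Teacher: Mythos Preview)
Your induction-on-signature skeleton and the identification of the central claim---every non-initial TRB signature $s$ is an MJ-signature---match the paper exactly. The real divergence is in how that claim is proved. You propose to follow the GVW lemma: take the $\prec_p$-smallest TRB pair $p_1$ whose signature properly divides $s$, lift to signature $s$, form the J-pair with a top-reducer, and then \emph{verify} that this J-pair is the M-pair of $DONE$. The paper reverses the direction: it \emph{starts} from the M-pair $[m,p]$ of $DONE$ signed $s$ (which exists because the initial pairs are already in $DONE$), observes that $mp$ must be top reducible (else $p\sim mp$ would lie in the TRP similar class of $s$ with a strictly smaller signature, contradicting that $s$ is TRB), reduces by some $p_1\in TRB\cap DONE$, forms the J-pair $[m',p]$, and then uses the minimality built into the M-pair to force $m'=m$. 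So the M-pair is itself a J-pair of two $DONE$ elements, hence lies in $TODO$ and passes the criterion.

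This reversal is not cosmetic: it completely sidesteps the obstacle you flag. In your direction you must show your constructed J-pair beats every competitor in $M\times DONE$, including possible non-TRB TRP pairs that you have not yet excluded from $DONE$ (that exclusion is the \emph{next} proposition in the paper, proved using this one). The tie-break clause ``$p_0\sim p$ but $sig(p_0)\succ_s sig(p)$'' is exactly where such a competitor could displace your $p_1$, and---contrary to your expectation---compatibility of $\prec_m$ and $\prec_s$ does not resolve this; indeed the paper's proof never invokes compatibility at all. In the paper's direction there is nothing to verify about minimality: the M-pair has it by definition, and one only needs to show it is a J-pair, which follows from the short chain $p_3\prec_p p$, $sig(p_3)\mid s$, $p_3\in DONE$ (by the induction hypothesis) forcing $sig(p_3)=s$. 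I would recast your argument to start from the M-pair rather than from the smallest TRB pair.
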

\begin{proof}
\item[1.]
Suppose $s$ is a TRB signature. All the TRB signatures smaller than $s$ have been stored to $DONE$ already. If $s$ is initial, it will pass the MJCriterions, and be reduced and stored to $DONE$. We suppose $s$ is not initial below.
\item[2.]
The M-pair of $DONE$ corresponding to $s$ must be exist. Say it is $[m,p]\in M\times DONE$, where $m\neq 1$.
With the definition of TRB pairs, $mp$ is top reducible. By Proposition (\ref{trtrb}), $mp$ can be top reduced by $p_1\in DONE\cap TRB$.
\item[3.]
Denote $[m', p]$ the J-pair of $p$ and $p_1$, where $m'|m$.
Let $p_2$ be a TRP pair signed $sig(m'p)$, $p_2\equiv [m_3,p_3]\in M\times TRB$. Since $p_3\prec_p p$, $sig(p_3)|sig(p_2)|s$, but $p_3$ is not the Mpair, we deduces If $sig(p_3)= s$.
\item[4.]
At last, we have $sig(p_3)=sig(p_2)=s$. This means $s$ is an MJ-signature of $DONE$. $[m,p]$ will pass MJCriterions, be reduced into a TRB pair, and be stored to $DONE$.
\end{proof}

With this proposition, we know that TRB-MJ is also a regular TRB algorithm. The TRB-MJ algorithm has another important property below.
\begin{prop}
All the signatures in $DONE$ can only be either syzygy or TRB signatures.
\end{prop}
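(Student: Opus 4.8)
The plan is to show that every pair ever stored in $DONE$ is a TRP pair whose signature is not properly divisible by any signature in its own TRP similar set --- i.e.\ a TRB pair --- or else a syzygy pair, so its signature is of the stated form. The argument tracks the history of the algorithm: pairs enter $DONE$ only through Step 3, after passing the MJCriterions (which include the GSyzygy Criterion, so any non-syzygy pair that is actually stored really did pass the Mpair Criterion) and after being put through TopReductions, whose output is by construction top irreducible by $DONE$. So first I would argue that the output $p'$ of TopReductions is in fact top irreducible by \emph{all} of $PAIR$, not merely by $DONE$. This is the analogue for TRB-MJ of the corresponding comments made after Propositions~\ref{f5trb} and the TRB-GVW discussion, and the engine is Proposition~\ref{mjtrb} together with an induction on signatures: when $sig(p')$ is being considered, all smaller TRB signatures are already in $DONE$, so if $p'$ were top reducible it could be top reduced (via Proposition~\ref{trtrb}) by a TRB pair already in $DONE$, contradicting the defining property of the TopReductions output. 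Hence every stored non-syzygy $p'$ lies in $TRP$.

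Next I would upgrade ``$p' \in TRP$'' to ``$p'$ is a TRB pair''. Suppose for contradiction that a stored non-syzygy pair $p'$ is TRP but not TRB, so some pair $q$ in the same TRP similar set has $sig(q)$ properly dividing $sig(p')$. Writing $s = sig(p')$, the signature $s$ was processed as an MJ-signature: the pair that got reduced to $p'$ was an M-pair $[m,p]$ of $DONE$ with $m \neq 1$ that also passed the Mpair Criterion, so $s$ was an MJ-signature of $DONE$ at that stage. Now I would run essentially the argument in steps 3--4 of the proof of Proposition~\ref{mjtrb} in reverse: the existence of $q$ TRP-similar to $p'$ with $sig(q) \mid sig(p')$ properly, combined with $q \equiv [m_4,p_4] \in M\times TRB$ and the minimality built into the M-pair condition, forces $sig(p_4)$ to divide $s$ strictly, and $p_4 \prec_p$ the relevant pair, contradicting the M-pair minimality that $s$ actually satisfied when it was stored. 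The cleanest route is probably to note that ``$p'$ is TRP but not TRB'' is exactly the hypothesis that makes the J-pair $[m',p]$ in the proof of Proposition~\ref{mjtrb} land on a \emph{strictly smaller} signature, which contradicts that $[m,p]$ was the M-pair for $s = sig(p')$; so the same computation that proved ``$sig(p_3) = s$'' there in fact forbids the non-TRB case here.

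Finally, syzygy pairs: a pair stored in $DONE$ with $lm(p')=0$ has, by definition in Section~2, a syzygy signature, so it falls in the second alternative. Conversely --- though the statement only asserts one direction --- one could remark that the GSyzygy Criterion is what guarantees no ``spurious'' stored pair has a syzygy signature without being syzygy, i.e.\ that the dichotomy is genuinely a dichotomy. I would close by assembling the pieces: every element of $DONE$ arose as a TopReductions output $p'$; if $lm(p')=0$ then $sig(p')$ is a syzygy signature; if $lm(p')\neq 0$ then $p'\in TRP$ by the first paragraph and $p'$ is TRB by the second, so $sig(p')$ is a TRB signature.

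The main obstacle I expect is the second paragraph --- carefully reconciling the M-pair minimality condition (which compares pairs under $\prec_p$, breaking ties with $\prec_s$ on signatures) with the TRP/TRB distinction (which is about proper divisibility of signatures within a similar set). One has to be sure that ``$p'$ TRP but not TRB'' genuinely produces a pair that violates the M-pair condition recorded when $s$ was selected, and this requires quoting Proposition~\ref{trppro} to pass between ``TRP-similar'' and ``$\sim$'', and tracking which similar-set member is actually the M-pair. The induction bookkeeping in the first paragraph (that TRB signatures smaller than the current one are genuinely all present in $DONE$, which is Proposition~\ref{mjtrb} applied at earlier stages) is routine but must be stated to make the top-irreducibility claim airtight.
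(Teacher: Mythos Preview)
Your approach is correct, but it takes a longer route than the paper's argument. The paper bypasses your decomposition ``first show stored pairs are TRP, then upgrade to TRB'' entirely and argues directly on signatures. Given a non-syzygy, non-TRB signature $s$, write the TRP pair signed $s$ as $p \equiv [m_1,p_1]$ with $p_1 \in TRB$ and $m_1 \neq 1$; by Proposition~\ref{mjtrb}, $p_1$ is already in $DONE$. Now for any $[m'',p''] \in M\times DONE$ signed $s$ with $m'' \neq 1$, we have $p \preceq_p m''p'' \sim p''$ by Proposition~\ref{trppro}, so $p_1$ sits at the $\prec_p$-minimum and the M-pair of $DONE$ signed $s$ must have its pair-component $\sim p_1 \sim p$. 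Hence the M-pair's value is equivalent to $p$, hence TRP, hence not top reducible. But every J-pair in $TODO$ has a top-reducible value (by the other component that formed it), so the M-pair is not a J-pair and nothing in $TODO$ signed $s$ can pass the Mpair Criterion. That is the whole proof.

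What this buys: the paper never needs your first paragraph as a separate lemma, and it never has to reconcile the M-pair tiebreaker $(\prec_p,\succ_s)$ with the TRB divisibility condition --- precisely the obstacle you flagged. Your second paragraph does reach the right contradiction, but the framing ``run steps 3--4 of Proposition~\ref{mjtrb} in reverse'' obscures it: what you are really after is just the observation that the M-pair's value is already TRP (because $p_1 \in DONE$ forces the M-pair component to be $\sim p_1$), whence it cannot coincide with any J-pair. Once you state that directly, the case analysis on $p \prec_p q$ versus $p \sim q$ collapses into a single line, and your first paragraph becomes unnecessary for this particular proposition.
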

\begin{proof}
\item[1.]
Suppose $s$ is a nether syzygy nor TRB signature. All such signatures (neither syzygy nor TRB) smaller than $s$ were rejected from $DONE$. We should to prove that $s$ is also not in $DONE$.
\item[2.]
Let $p$ be a TRP pair signed $s$, $p\equiv [m_1,p_1]\in M\times TRB$, where $m_1\neq 1$. By Proposition (\ref{mjtrb}), $p_1$ has been stored in $DONE$ already. The Mpair signed $s$ should be $\sim p_1\sim p$. Then it can neither be top reduced nor be a J-pair.
\end{proof}

In general, the Mpair Criterion has the following features:
\begin{itemize}
\item
For a same signature $s$, if the M-pair and another multiplied pair are not same. To reduce them into TRP pairs, the M-pair might need less top reductions than the other one, because it has the smallest polynomial part.
\item
With the definition of regular TRB algorithm, the Rewritten or Signature Criterions can hardly block more signatures than the Mpair Criterion. And
sometimes they may miss non-TRB and non-syzygy signatures. So the conclusion of these non-syzygy criterions is:

\textbf{Signature $\prec$ Rewritten} or \textbf{ERewritten\newline  $\prec$ SRewritten $\prec$ Mpair}.
\end{itemize}

\begin{exam}
Back to Example (\ref{f5criexam}) again. When $f_2$ was replaced by $x_1x_2x_5-x_2^2x_5$, the signature $x_2x_3x_5E_1$ will never be reduced by the TRB-MJ algorithm, because the multiplied pair $[x_3,(x_2x_5E_1+\cdots, x_2^2x_4x_5)]$ meets the Mpair Criterion.
\end{exam}

\section{Conclusions and Future}
This paper presented a new conception for the the comparison among some Gr\"{o}bner-computing algorithms. With this conception, we presented the equivalent termination conditions for some important algorithms. We also proposed the Mpair Criterion for the computing.

The author is now looking for techniques to implement TRB algorithms more efficient. Some techniques (in particularly for the TRB-MJ algorithm) will be presented in the next paper.

%
\bibliographystyle{abbrv}

\end{document}